\documentclass[a4paper,USenglish,numberwithinsect,cleveref, autoref, thm-restate]{lipics-v2021}

\pdfoutput=1 %
\hideLIPIcs  %
\nolinenumbers

\graphicspath{{./figures/}}%

\title{On the Complexity of Robust Markov Decision Processes and Bisimulation Metrics} %

\author{Marnix Suilen}{University of Antwerp, Belgium}{marnix.suilen@uantwerpen.be}{https://orcid.org/0000-0003-2163-3504}{}%

\author{Guillermo A. P\'{e}rez}{University of Antwerp -- Flanders Make, Belgium}{guillermo.perez@uantwerpen.be}{https://orcid.org/0000-0002-1200-4952}{}%

\authorrunning{M. Suilen and G.\,A. P\'{e}rez} %

\Copyright{Jane Open Access and Joan R. Public} %

\ccsdesc[500]{Theory of computation~Markov decision processes}

\keywords{Robust Markov decision processes, bisimulation metrics} %

\category{} %

\relatedversion{} %

\EventEditors{John Q. Open and Joan R. Access}
\EventNoEds{2}
\EventLongTitle{42nd Conference on Very Important Topics (CVIT 2016)}
\EventShortTitle{CVIT 2016}
\EventAcronym{CVIT}
\EventYear{2016}
\EventDate{December 24--27, 2016}
\EventLocation{Little Whinging, United Kingdom}
\EventLogo{}
\SeriesVolume{42}
\ArticleNo{23}

\usepackage{amsfonts}
\usepackage{mathtools}
\usepackage{amssymb}
\usepackage{amsthm}
\usepackage{xspace}
\usepackage[basic,caps]{complexity}
\usepackage{nicefrac}
\usepackage[inline]{enumitem}
\usepackage{booktabs}
\usepackage{algorithm}
\usepackage{algcompatible}

\usepackage{todonotes}

\usepackage{tikz}
\usetikzlibrary{patterns,arrows,arrows.meta,calc,shapes,shadows,decorations.pathmorphing,decorations.pathreplacing,automata,shapes.multipart,positioning,shapes.geometric,fit,circuits,trees,shapes.gates.logic.US,fit,backgrounds,decorations.text,topaths}

\usepackage{subcaption}

\newtheorem{problem}{Problem}

\newcommand{\ie}{\emph{i.e.}\@\xspace}
\newcommand{\eg}{\emph{e.g.}\@\xspace}

\newcommand{\tup}[1]{(#1)}
\newcommand{\supp}{\mathsf{Supp}}

\newcommand{\Linf}{L_\infty}
\newcommand{\Lone}{L_1}

\DeclareMathOperator*{\argmax}{argmax}

\renewcommand{\vec}[1]{\boldsymbol{#1}}

\newcommand{\bR}{\mathbb{R}}
\newcommand{\bE}{\mathbb{E}}

\newcommand{\cD}{\mathcal{D}}

\newcommand{\cU}{\mathcal{U}}
\newcommand{\cM}{\mathcal{M}}
\newcommand{\cR}{\mathcal{R}}

\newcommand{\cI}{\mathcal{I}}
\newcommand{\cLinf}{\mathcal{L}_\infty}

\newcommand{\cO}{\mathcal{O}}

\newcommand{\sinit}{s_{\iota}}

\newcommand{\Vpes}{\underline{V}}

\newcommand{\Deltapes}{\underline{\Delta}}

\newcommand{\Plow}{\check{P}}
\newcommand{\Pup}{\hat{P}}

\newcommand{\fM}{\mathfrak{M}}
\newcommand{\fD}{\mathfrak{D}}
\newcommand{\fF}{\mathfrak{F}}
\newcommand{\fB}{\mathfrak{B}}

\newcommand{\norm}[2]{\lVert #1 \rVert_{#2}}

\newcommand{\midd}{\,\middle\vert\,}

\begin{document}

\maketitle

\begin{abstract}
Robust Markov decision processes (RMDPs) extend standard Markov decision processes (MDPs) to account for uncertainty in the transition probabilities. 
RMDPs have an uncertainty set that defines a set of possible transition functions, each of which induces a standard MDP. 
The natural objective in an RMDP is to optimize the discounted cumulative reward under the worst-case transition function in the uncertainty set. 
We study the complexity of the associated threshold problem for RMDPs with polytopic uncertainty sets in halfspace representation. 
Previous results focused on approximating the optimum or restricted attention to specific subclasses of RMDPs, such as interval MDPs or $\Linf$-RMDPs. 
Our contributions are threefold: (1) For $(s,a)$-rectangular RMDPs, we prove that robust policy evaluation is in {\P} via robust linear programming, and that the threshold problem is in {\NP}. 
As a corollary, robust policy iteration is a polynomial-time algorithm for these RMDPs when the discount factor is fixed. 
(2) For 
$s$-rectangular RMDPs, we show that the threshold problem is in {\PSPACE}  via the first-order theory of the reals. 
(3) We establish lower bounds by reducing both parity games and bisimulation metrics between MDP states to the RMDP threshold problem.
A polynomial-time algorithm for the threshold problem would resolve the long-standing open question of whether parity games can be solved in polynomial time. 
The reduction from bisimulation metrics also yields a practical benefit: it allows us to apply robust policy iteration as a more efficient alternative to the standard fixed-point iteration, as our empirical evaluation demonstrates. 
\end{abstract}

\section{Introduction}\label{sec:introduction}

Markov decision processes (MDPs) form the theoretical backbone for many sequential decision-making problems and are foundational in reinforcement learning (RL)~\cite{DBLP:books/wi/Puterman94,DBLP:books/lib/SuttonB2018}.
The standard objective in an MDP is for the decision-making agent to compute a policy that maximizes the expected cumulative discounted reward.

MDPs rely on the assumption that their transition probabilities are exactly known; a prohibitive assumption in practice, where probabilities may be inaccurate or derived from a finite amount of samples and thus subject to statistical error~\cite{DBLP:journals/sttt/BadingsSSJ23}.
\emph{Robust MDPs} (RMDPs) overcome this assumption by instead modeling an uncertainty set which contains all probabilities the transition function can attain~\cite{DBLP:journals/mor/Iyengar05,DBLP:conf/birthday/SuilenBB0025,DBLP:journals/mor/WiesemannKR13}.
The typical objective in an RMDP is to optimize the policy and value against the \emph{worst-case} instance in the uncertainty set.
RMDPs are particularly well studied from the algorithmic and semantic perspectives, and have a wide range of applications in, \eg, reinforcement learning~\cite{DBLP:journals/jmlr/JakschOA10,DBLP:journals/tmlr/StarreLCO23,DBLP:journals/jcss/StrehlL08,DBLP:conf/nips/SuilenS0022}, statistical model checking~\cite{DBLP:conf/cav/AshokKW19}, and control~\cite{DBLP:journals/jair/BadingsRAPPSJ23}.
Under certain structural independence assumptions on the uncertainty set, known as \emph{rectangularity}, the Bellman operator of standard MDPs can be extended to RMDPs to account for the worst (or best) case instance.
Consequently, the classical value and policy iteration algorithms can be extended to RMDPs~\cite{DBLP:journals/mor/Iyengar05,DBLP:journals/ior/NilimG05}.

For standard MDPs, it is well known that comparing their value against a threshold is \P-hard and can be solved in polynomial time through a linear program (LP)~\cite{DBLP:journals/mor/PapadimitriouT87}.
Existing complexity results for RMDPs are less pronounced and follow different directions. 
In particular, previous works establish (1) polynomial time complexity based on the number of iterations the robust Bellman operator needs to achieve $\epsilon$-approximate solutions, where the precision $\epsilon$ and the discount factor are both considered constant~\cite{DBLP:journals/ior/NilimG05,DBLP:journals/mor/WiesemannKR13}; (2) that RMDPs with polytopic uncertainty where the vertices are part of the input (\ie, vertex representation) are in $\NP \cap \coNP$ by reduction to stochastic games~\cite{DBLP:conf/ijcai/ChatterjeeGK0Z24};
and (3) RMDPs with an $\Linf$-based uncertainty set can be solved in (strongly) polynomial time for a fixed discount factor.
Note that all these results are consistent with each other, but the complexity of exactly solving RMDPs with polytopic uncertainty in the halfspace representation, the input format typically used in practice~\cite{DBLP:journals/jair/BadingsRAPPSJ23,DBLP:conf/aaai/MeggendorferWW25,DBLP:journals/ior/NilimG05,DBLP:conf/nips/SuilenS0022}, has remained open until now.

\subparagraph*{Contributions}
In this paper, we expand on and generalize previous results regarding the computational complexity of RMDPs. We focus on the decision variant of the robust optimization problem: does there exist a policy such that its value surpasses a given threshold under the worst-case instance of the uncertainty set? 
Our specific contributions are as follows:
\begin{enumerate}
    \item For $(s,a)$-rectangular RMDPs, we prove that robust policy evaluation is in {\P} via robust linear programming (\Cref{thm:rmdp:policy:eval:in:P}), and that the threshold problem is in {\NP} (\Cref{thm:rmdp:sa:threshold:NP}). 
    As a corollary, robust policy iteration is a polynomial-time algorithm for these RMDPs when the discount factor is fixed (\Cref{thm:rpi:polytime}). 
    We also show how three common subclasses of RMDPs, interval MDPs, $\Lone$-RMDPs, and $\Linf$-RMDPs can be translated into equivalent polytopic RMDPs of polynomial size.
    \item For the larger class of $s$-rectangular RMDPs, where optimal policies may need randomization, we show that the threshold problem is in {\PSPACE} via the first-order theory of the reals (\Cref{thm:rmdp:s:threshold}).
    \item We establish lower bounds by reducing both parity games and bisimulation metrics between MDP states to the threshold problem (\Cref{prop:lower-bound,thm:bisim:reduces:rmdp}).
    Consequently, a polynomial-time algorithm for the RMDP threshold problem would resolve the long-standing open question of whether parity games can be solved in polynomial time. 
\end{enumerate}
We empirically demonstrate the benefits of using robust policy iteration over the standard fixed-point iteration for computing bisimulation metrics.

\subsection*{Related Work}
In addition to the results already mentioned, our work is consistent with and expands on the following existing results.
In~\cite{DBLP:conf/cav/PuggelliLSS13}, robust linear programs are constructed to compute \emph{optimistic} values and policies for RMDPs with convex uncertainty sets. 
That is, policies and their value under the \emph{best-case} in the uncertainty set.

For Markov chains with interval uncertainty,~\cite{chk2013} shows that computing the value can be done in polynomial time via a linear program.
Similarly, bisimulation metrics between Markov chains are also known to be in \P~\cite{DBLP:conf/fossacs/ChenBW12}.
Both results are special cases of ours: that evaluation of a memoryless policy in $(s,a)$-rectangular RMDPs is in \P.
For bisimulation metrics between MDP states, it is known that a single step of the fixed-point iteration is in \P~\cite{DBLP:conf/fsttcs/ChatterjeeAMR08}.
In~\cite{DBLP:conf/uai/FernsP14}, it is shown that, for a fixed coupling, the fixed-point operator for the bisimulation metric under that coupling yields an MDP value function. 
Consequently, there exists an MDP whose value function equals the optimal bisimulation metric.
We extend their result by showing that the optimal bisimulation metric, characterized by its fixed-point equation, coincides with the robust value function of a specifically constructed RMDP, characterized by its robust Bellman equation. This generalizes their result and allows us to directly apply RMDP algorithms such as robust policy iteration.

RMDPs are closely related to stochastic games~\cite{DBLP:conf/ijcai/ChatterjeeGK0Z24,DBLP:journals/mor/Iyengar05,DBLP:conf/birthday/SuilenBB0025}.
A similar observation has been made for MDPs (\ie, one-and-a-half-player games) and bisimulation metrics between DTMC states.
Consequently, MDP algorithms such as policy iteration have been proposed to compute said bisimulation metrics~\cite{DBLP:conf/concur/TangB16}. Our work connecting bisimulation metrics between MDP states with RMDPs can be seen as an extension of that work to two-and-a-half-player games.
Finally, in~\cite{DBLP:conf/concur/Kiefer024}, the problem of \emph{minimizing} the bisimulation metric under memoryless randomized policies is shown to be complete for the existential theory of the reals.

\section{Background}\label{sec:background}
We write $\bR$ and $\bR_{\geq 0}$ for the sets of real numbers and non-negative real numbers, respectively.
For a finite set $X$ we write $|X|$ for the number of elements in $X$.
A discrete probability distribution over $X$ is a function $\mu \colon X \to [0,1] \subset \bR$ with $\sum_{x \in X} \mu(x) = 1$.  
The set of all distributions over $X$ is denoted by $\cD(X)$.

Vectors and matrices are denoted in bold, \eg, $\vec{x} \in \bR^{n}$, and we write $\vec{x}(i)$ for the $i$-th element of $\vec{x}$.
The vector $\vec{e}_i \in \bR^{n}$ is the element of the standard basis with $\vec{e}_i(i) = 1$ and zeros elsewhere, and $\vec{I} \in \bR^{n \times n}$ denotes the identity matrix.
For a finite set $J$, we write $\vec{x} \in \bR^{J}$ for the vector indexed by elements of $J$, implicitly assuming a total order on $J$. 
This notation extends to Cartesian products of indices, \ie, $\vec{x} \in \bR^{J
\times K}$.
A \emph{convex polytope} 
is a closed and bounded set of points $\vec{x} \in \bR^{n}$ defined by a system of linear inequalities $\vec{D}\vec{x} \leq \vec{b}$, with $\vec{D} \in \bR^{m \times n}$ and $\vec{b} \in \bR^{m}$, which we denote by the tuple $(\vec{D},\vec{b})$.
For any bounded set $X$, its \emph{convex hull} is the smallest convex polytope that contains $X$.

A \emph{linear program} (LP) is an optimization problem of the form
\[
\begin{array}{ll}
    \text{minimize} & \vec{c}^\intercal \vec{x} \\
    \text{subject to} & \vec{A}\vec{x} \leq \vec{b}
\end{array}
\]
where $\vec{A} \in \bR^{m \times n}$, $\vec{b} \in \bR^m$, and $\vec{c} \in \bR^{n}$ form the input, and $\vec{x} \in \bR^n$ is a vector of $n$ optimization variables. \emph{Solving the LP} means finding an assignment of $\vec{x}$ that is feasible, \ie, it satisfies $\vec{A}\vec{x} \leq \vec{b}$, and optimal, \ie $\vec{c}^\intercal \vec{x} \leq \vec{c}^\intercal \vec{y}$ for all $\vec{y} \in \bR^n$ that are feasible.

For computability matters, we suppose all entries of $\vec{A}$, $\vec{b}$, and $\vec{c}$ are rational numbers. In this case, it is known that a solution to the LP, if one exists, is also a vector of rationals. LPs can be solved in 
time $(m+n+L)^{\cO(1)}$ \cite{DBLP:journals/ior/BlandGT81,khachiyan1980polynomial,DBLP:conf/stoc/Vaidya87}. In words, we can solve them in time polynomial in the number $n$ of variables, the number $m$ of inequalities, and the number $L$ of bits used to represent any entry of $\vec{A}$, $\vec{b}$, and $\vec{c}$ as a pair of integers written in binary. 

\subsection{Markov Decision Processes}

\begin{definition}[MDP]
    A Markov decision process (MDP) is a tuple $\tup{S,A,P,R,\sinit,\gamma}$, where $S$ and $A$ are finite sets of states and actions; $P \colon S \times A \to \cD(S)$, the transition function; $R \colon S \times A \to \bR$, the reward function; $\sinit \in S$, the initial state; $\gamma \in (0,1)$, a discount factor.
\end{definition}

A \emph{policy} selects actions to resolve the non-determinism in an MDP. 
In general, a policy maps paths to distributions over actions: $\pi \colon (S A)^* S \to \cD(A)$. The set of all policies is denoted by $\Pi$.
A \emph{memoryless randomized} policy is a function $\pi \colon S \to \cD(A)$ that maps states to distributions over actions; a \emph{memoryless deterministic} policy, a function $\pi \colon S \to A$ that maps states to actions.
The set of all memoryless randomized (resp. memoryless deterministic) policies is denoted by $\Pi^\text{MR}$ (resp. $\Pi^\text{MD}$).

An MDP and a policy together induce a classical (possibly infinite) Markov chain. 
The probability space of a Markov chain is well (and even uniquely) defined: its events are sets of infinite paths starting from the initial state $\sinit$ \cite{DBLP:books/daglib/0020348,klenke2008probability}. 
We can thus talk about the expectation of (Borel-measurable) aggregation functions applied to the transition rewards along runs.

\subparagraph{Discounted-reward value.} 
Given an MDP $M$ and a policy $\pi \in \Pi$, its expected discounted cumulative reward value function $V^\pi_M \colon S \to \bR$, or simply ``value'', is defined as
\[
V^\pi_M = \bE_{\pi} \left[ \sum_{t=0}^{\infty} \gamma^t R(s_t,a_t) \right],
\]
where $R(s_t,a_t)$ is the reward obtained from the state-action pair visited at step $t$.
We are interested in maximizing this value, %
\ie we are looking for the optimal value
\(
V^*_M = \sup_{\pi \in \Pi} V^{\pi}_M. %
\)

It is well established that for this objective in MDPs, the optimum can be achieved by a memoryless deterministic policy~\cite{DBLP:books/wi/Puterman94}.
Consequently, the supremum is equal to the maximum over the finite set of all memoryless deterministic policies, \ie 
\[
V^*_M = \max_{\pi \in \Pi^\text{MD}} V^\pi_M.
\]

\subsubsection{Decision Problem, Encoding, and Known Complexity Results}

We reformulate the problem of optimizing the expected reward as that of deciding whether a specific threshold can be surpassed or not.
\begin{problem}[MDP discounted reward problem]\label{problem:mdp:threshold}
        Given an MDP $M$, is there a policy whose value is at least some given threshold $\kappa \in \bR$, \ie, 
        $
        \exists \pi \in \Pi^\text{MD} \colon V_{M}^{\pi}(s_\iota) \geq \kappa
        $?
\end{problem}

It is well known that the discounted reward problem for MDPs reduces, in polynomial time, to solving an LP. To facilitate the encoding, the transition and reward functions of the MDP can be interpreted as matrices.
Formally, for each state-action pair $(s,a) \in S \times A$, we define probability vectors
$\vec{p}_{sa} \in [0,1]^{S}$ with $\vec{p}_{sa}(s') = P(s,a)(s')$ for all states $s' \in S$. Similarly, for any memoryless deterministic policy $\pi \in \Pi^\text{MD}$, we write $\vec{P}^\pi \in [0,1]^{S \times S}$ and $\vec{R}^\pi \in \bR^S$ to denote the transition matrix and reward vector of the induced Markov chain.

\begin{definition}[MDP LP encoding~\cite{DBLP:books/wi/Puterman94}]\label{def:mdp-lp}
Let $\vec{v} \in \bR^{S}$ be a vector of $|S|$ optimization variables, $\vec{c} \in \bR^{S}$ a vector of $|S|$ strictly positive, but otherwise arbitrary, coefficients (\eg, all ones). 
For
the (unique) 
optimal solution $\vec{v}$ to the following LP, we have that $\vec{v}(\iota) = V_M^*$.
\[
    \begin{array}{ll}
        \text{minimize} & \vec{c}^\intercal \vec{v} \\
        \text{subject to} & 
        \bigwedge_{s \in S} \bigwedge_{a \in A} (-\vec{e}_s + \gamma \vec{p}_{sa})^\intercal \vec{v} \leq - R(s,a)
    \end{array}
    \]
\end{definition}

This LP is polynomial in the size of $M$. Moreover, if the probabilities and rewards of $M$, and the threshold $\kappa$ are all rational numbers given as pairs of binary-encoded integers, the same holds for the entries of the matrices and vectors of the LP. (Importantly, the magnitude of the maximal constant in the LP is bounded by that of $M$.)
Since solving LPs is in polynomial time and $V^*_M = \max_{\pi \in \Pi^\text{MD}} V^\pi_M$, solving \autoref{problem:mdp:threshold} is also in polynomial time.

In practice, however, value iteration and policy iteration are more commonly used.
Value iteration, on the one hand, computes the least fixed point of the Bellman equation, which results in the optimal value function $V_M^*$.
Several variants of value iteration have been developed over time, either focusing on soundness~\cite{DBLP:journals/tcs/HaddadM18,DBLP:conf/cav/QuatmannK18} or efficiency~\cite{DBLP:conf/cav/HartmannsK20}.
Policy iteration, on the other hand, computes an optimal policy and its value by iteratively evaluating and improving an arbitrary initial policy. 
The evaluation of a memoryless deterministic policy $\pi$, \ie, computing $V^\pi_M$, can be done by solving the following system of linear equations:
\begin{equation}\label{eq:MDP:policy:eval}
    V^\pi_M = (\vec{I} - \gamma \vec{P}^\pi)^{-1} \vec{R}^\pi.
\end{equation}

The policy iteration algorithm (optimized with a so-called optimality test) is as follows.
\begin{definition}[Policy iteration with suboptimality test]\label{def:policy:iteration}
    Let $\pi \colon S \to A$ be an arbitrary policy and $A_s^* = A$ be the set of possibly optimal actions for every state $s$. %
    Repeat the following:
    \begin{enumerate}
        \item Evaluate the current policy $\pi$ by computing $V_{M}^\pi$ by solving \autoref{eq:MDP:policy:eval}. 
        \item Compute the normalized values
        \(
        \Delta^\pi_{sa} = \left(R(s,a) + \gamma \sum_{s' \in S} P(s,a)(s')V_M^\pi(s') \right) - V_M^{\pi}(s). 
        \)
        \item Determine the improving actions:
        $
        A_s^{+} = \{a \in A_s^* \mid \Delta^\pi_{sa} > 0\}
        $.
         If $A_s^+ = \emptyset$ for all $s \in S$, return $\pi$ and $V_M^\pi$. 
        Otherwise, update $\pi$ such that 
        $
        \forall s \in S \colon \pi(s) = \argmax_{a \in A_s^*} \Delta^{\pi}_{sa}$.
        
        \item %
        Prune suboptimal actions by computing the new set of possibly optimal actions as follows:
        \[
        A_s^* = \left\{a \in A_s^* \midd \Delta^\pi_{sa} \geq \max_{a' \in A} \{\Delta^\pi_{sa'}\} - \frac{\gamma}{1-\gamma} \left( \max_{s' \in S} \max_{a' \in A} \{\Delta^\pi_{s'a'}\} - \min_{s' \in S} \max_{a' \in A} \{\Delta^\pi_{s'a'}\} \right) \right\}.
        \]
    \end{enumerate}
\end{definition}

Policy iteration without suboptimality tests is sound as long as the policy evaluation step is precise enough to correctly identify improving actions~\cite{hartmanns2026revised}.
The variant with suboptimality tests is known to terminate in at most $T(|S|(|A|-1))$ iterations, with $T = \lfloor \nicefrac{|S|}{(1-\gamma)} \log (\nicefrac{|S|^2}{(1-\gamma)})  \rfloor + 1$, and $\cO(|S|^2(|S|(|A|-1)))$ operations per iteration to compute all values exactly.
Interestingly, it is unknown whether policy iteration runs in polynomial time independent of the discount factor, \ie, when it is part of the input~\cite{kallenberg2011markov}.

\section{Robust Markov Decision Processes}

We now introduce robust Markov decision processes (RMDPs).
Intuitively, an RMDP is a set of MDPs that differ in their transition probabilities, and policies for RMDPs must be optimized for the best- or worst-case MDP in the set.

\begin{definition}[RMDPs]
A \emph{robust MDP} (RMDP) is a tuple $\tup{S,A,\cU,R,\sinit,\gamma}$, where
$S,A,R,\sinit$ and $\gamma$ are as for MDPs, and $\cU \subseteq
\bR_{\geq 0}^{S \times A \times S}$ is a set of non-negative real vectors
$\vec{u} \in \cU$
that satisfy
$\sum_{s' \in S} \vec{u}(s,a,s') = 1$, for all $(s,a) \in S
\times A$.
\end{definition}
The \emph{uncertainty set} $\cU$ of an RMDP induces a family of transition functions $P_{\vec{u}}(s,a)(s') = \vec{u}(s,a,s')$, one for each $\vec{u} \in \cU$, such that each tuple $\tup{S,A,P_{\vec{u}},R,\sinit,\gamma}$ is a classical MDP.

\subsection{Rectangularity: Local Dependencies in Uncertainty Sets}

In this work, we focus on the case where $\cU$ is a convex polytope. We also study subcases where $\cU$ arises from the Cartesian product of ``transition-local'' polytopes. This additional structure is known as \emph{rectangularity}, with important algorithmic and complexity implications for solving the RMDP at hand.

\begin{figure}[t]
    \centering
    \begin{subfigure}[t]{0.3\textwidth}
        \centering
        \resizebox{\textwidth}{!}{
        \begin{tikzpicture}[>=stealth,
    action/.style={minimum size=1mm,inner sep=0pt,fill=black,circle}]
    \node[state] (s0) at (0,0) {$s_0$};
    \node[action] (s0a1) at ($(s0) + (0.8,0.6)$) {};
    \node[action] (s0a2) at ($(s0) + (0.8,-0.6)$) {};

    \node[state] (s1) at ($(s0) + (2.5,1)$) {$s_1$};
    
    \node[action] (s1a1) at ($(s1) + (1,0)$) {};
    \node[draw=none, right=-1mm of s1a1] (s1a1name){$a_1$};
    
    \node[state] (s2) at ($(s0) + (2.5,-1)$) {$s_2$};
    
    \node[state] (s3) at ($(s0) + (1.2,-2)$) {$s_3$};
    \node[action] (s3a2) at ($(s3) + (0,-1)$) {};
    \node[draw=none, below=-1mm of s3a2] (s3a2name) {$a_2$};

    \draw[<-] (s0.west) -- +(-0.3,0);
    \draw (s0) -- node[above]{$a_1$} (s0a1);
    \draw (s0) -- node[below]{$a_2$} (s0a2);

    \draw[->] (s0a1) edge[bend left=20]node[above]{$x_1$} (s1);
    \draw[->] (s0a1) edge[bend left=20]node[above]{$x_2$} (s2);

    \draw[->] (s0a2) edge[bend left=-20]node[below]{$x_3$} (s1);
    \draw[->] (s0a2) edge[bend left=-10]node[below]{$x_4$} (s2);
    \draw[->] (s0a2) edge[bend left=10]node[left]{$x_5$} (s3);

    \draw[->] (s2) edge[loop right]node[right]{$a_1, 1$} (s2);

    \draw (s1) -- (s1a1);
    \draw (s3) -- (s3a2);

    \draw[->] (s1a1) edge[bend right = 45]node[above]{$x_6$} (s1);
    \draw[->] (s1a1) edge[bend left = 20]node[right]{$x_7$} (s2);

    \draw[->] (s3a2) edge[bend left = 45]node[left]{$x_8$} (s0);
    \draw[->] (s3a2) edge[bend right = 45]node[right]{$x_9$} (s2);
    
\end{tikzpicture}
        }
        \caption{An $(s,a)$-rectangular RMDP. Each transition has a unique variable assigned to it. The constraints on the uncertainty set dictate the range of each variable so that they induce valid probability distributions for each state-action pair.}
        \label{fig:example:RMDP:sa}
    \end{subfigure}
    \hfil
    \begin{subfigure}[t]{0.3\textwidth}
        \centering
        \resizebox{\textwidth}{!}{
        \begin{tikzpicture}[>=stealth,
    action/.style={minimum size=1mm,inner sep=0pt,fill=black,circle}]
    \node[state] (s0) at (0,0) {$s_0$};
    \node[action] (s0a1) at ($(s0) + (0.8,0.6)$) {};
    \node[action] (s0a2) at ($(s0) + (0.8,-0.6)$) {};

    \node[state] (s1) at ($(s0) + (2.5,1)$) {$s_1$};
    
    \node[action] (s1a1) at ($(s1) + (1,0)$) {};
    \node[draw=none, right=-1mm of s1a1] (s1a1name){$a_1$};
    
    \node[state] (s2) at ($(s0) + (2.5,-1)$) {$s_2$};
    
    \node[state] (s3) at ($(s0) + (1.2,-2)$) {$s_3$};
    \node[action] (s3a2) at ($(s3) + (0,-1)$) {};
    \node[draw=none, below=-1mm of s3a2] (s3a2name) {$a_2$};

    \draw[<-] (s0.west) -- +(-0.3,0);
    \draw (s0) -- node[above]{$a_1$} (s0a1);
    \draw (s0) -- node[below]{$a_2$} (s0a2);

    \draw[->] (s0a1) edge[bend left=20]node[above]{${y}$} (s1);
    \draw[->] (s0a1) edge[bend left=20]node[above]{$x_2$} (s2);

    \draw[->] (s0a2) edge[bend left=-20]node[below]{${y}$} (s1);
    \draw[->] (s0a2) edge[bend left=-10]node[below]{$x_4$} (s2);
    \draw[->] (s0a2) edge[bend left=10]node[left]{$x_5$} (s3);

    \draw[->] (s2) edge[loop right]node[right]{$a_1, 1$} (s2);

    \draw (s1) -- (s1a1);
    \draw (s3) -- (s3a2);

    \draw[->] (s1a1) edge[bend right = 45]node[above]{$x_6$} (s1);
    \draw[->] (s1a1) edge[bend left = 20]node[right]{$x_7$} (s2);

    \draw[->] (s3a2) edge[bend left = 45]node[left]{$x_8$} (s0);
    \draw[->] (s3a2) edge[bend right = 45]node[right]{$x_9$} (s2);
    
\end{tikzpicture}
        }
        \caption{An $s$-rectangular RMDP. The same variable $y$ occurs on two different transitions. While there is a dependency between these two transitions, $y$ occurs only on transitions from the same state $s_0$. This RMDP is thus $s$-rectangular.}
        \label{fig:example:RMDP:s}
    \end{subfigure}
    \hfil
    \begin{subfigure}[t]{0.3\textwidth}
        \centering
        \resizebox{\textwidth}{!}{
        \begin{tikzpicture}[>=stealth,
    action/.style={minimum size=1mm,inner sep=0pt,fill=black,circle}]
    \node[state] (s0) at (0,0) {$s_0$};
    \node[action] (s0a1) at ($(s0) + (0.8,0.6)$) {};
    \node[action] (s0a2) at ($(s0) + (0.8,-0.6)$) {};

    \node[state] (s1) at ($(s0) + (2.5,1)$) {$s_1$};
    
    \node[action] (s1a1) at ($(s1) + (1,0)$) {};
    \node[draw=none, right=-1mm of s1a1] (s1a1name){$a_1$};
    
    \node[state] (s2) at ($(s0) + (2.5,-1)$) {$s_2$};
    
    \node[state] (s3) at ($(s0) + (1.2,-2)$) {$s_3$};
    \node[action] (s3a2) at ($(s3) + (0,-1)$) {};
    \node[draw=none, below=-1mm of s3a2] (s3a2name) {$a_2$};

    \draw[<-] (s0.west) -- +(-0.3,0);
    \draw (s0) -- node[above]{$a_1$} (s0a1);
    \draw (s0) -- node[below]{$a_2$} (s0a2);

    \draw[->] (s0a1) edge[bend left=20]node[above]{${y}$} (s1);
    \draw[->] (s0a1) edge[bend left=20]node[above]{$x_2$} (s2);

    \draw[->] (s0a2) edge[bend left=-20]node[below]{$x_3$} (s1);
    \draw[->] (s0a2) edge[bend left=-10]node[below]{$x_4$} (s2);
    \draw[->] (s0a2) edge[bend left=10]node[left]{$x_5$} (s3);

    \draw[->] (s2) edge[loop right]node[right]{$a_1, 1$} (s2);

    \draw (s1) -- (s1a1);
    \draw (s3) -- (s3a2);

    \draw[->] (s1a1) edge[bend right = 45]node[above]{$x_6$} (s1);
    \draw[->] (s1a1) edge[bend left = 20]node[right]{$x_7$} (s2);

    \draw[->] (s3a2) edge[bend left = 45]node[left]{${y}$} (s0);
    \draw[->] (s3a2) edge[bend right = 45]node[right]{$x_9$} (s2);
    
\end{tikzpicture}
        }
        \caption{A non-rectangular RMDP. The variable $y$ occurs on multiple transitions that belong to different states, \ie, $s_0$ and $s_3$. Hence, the RMDP is neither $(s,a)$- nor $s$-rectangular}
        \label{fig:example:RMDP:non}
    \end{subfigure}
    \caption{An example RMDP under different forms of rectangularity.}
    \label{fig:example:RMDP}
\end{figure}
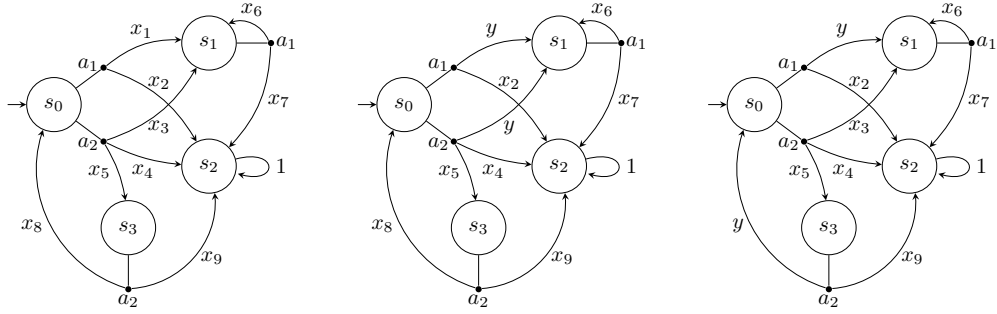

In a slight abuse of notation, for two sets $X \subseteq
\mathbb{R}_{\geq_0}^I$ and $Y \subseteq \mathbb{R}_{\geq 0}^J$ indexed by sets
$I$ and $J$ with $I \cap J = \emptyset$, we assume the vectors $\vec{v}$ in
the cross product $X \times Y$ are ``flattened'' so that $\vec{v} \in
\mathbb{R}_{\geq 0}^{I \cup J}$. That is, $X \times Y$ consists of real
vectors that are indexed by $I \cup J$ instead of pairs of vectors with the
first component indexed by $I$ and the other by $J$.

\begin{definition}[Rectangularity]
The uncertainty set $\cU$ of an RMDP $\tup{S,A,\cU,R,\sinit,\gamma}$ is:
\begin{description}
  \item[$(s,a)$-rectangular] if it is composed of sets $\cU_{(s,a)} \subseteq
    \mathbb{R}_{\geq 0}^{\{s\} \times \{a\} \times S}$, \ie\ $\cU =
    \bigtimes_{(s,a) \in S \times A} \cU_{(s,a)}$;
  \item[$s$-rectangular] if it is composed of sets $\cU_{s} \subseteq
    \mathbb{R}_{\geq 0}^{\{s\} \times A \times S}$ such that $\cU = \bigtimes_{s
    \in S} \cU_{s}$.
\end{description}
\end{definition}

Any RMDP that is neither $(s,a)$- nor $s$-rectangular is called \emph{non-rectangular}.
Intuitively, $(s,a)$-rectangularity allows nature to choose a probability distribution at each state-action pair independently from all other state-action pairs.
Consequently, nature can minimize the agent's reward by selecting probability distributions \emph{locally}, \ie, at each state-action pair independently and unrestricted from %
the uncertainty sets at other state-action pairs.
The $(s,a)$-rectangularity assumption is commonly made for both theoretical and algorithmic purposes, as well as in applications such as reinforcement learning~\cite{DBLP:journals/jmlr/JakschOA10,DBLP:conf/nips/SuilenS0022}, abstraction techniques~\cite{DBLP:journals/jair/BadingsRAPPSJ23}, and (statistical) model checking~\cite{DBLP:conf/cav/AshokKW19}.
Under $s$-rectangularity and, more generally, without rectangularity
assumptions, nature's choices are more restricted. While this results in
the agent's \emph{robust discounted reward} (a formal definition follows)
being less conservative compared to $(s,a)$-rectangularity, the associated
problems become harder.
\Cref{fig:example:RMDP} depicts an example RMDP under the three forms of rectangularity.

Known subclasses of $(s,a)$-rectangular RMDPs, such as interval MDPs (trivially included in convex polytopes), $\Linf$-RMDPs (which are a special case of IMDPs), and $\Lone$-RMDPs, can all be reduced to RMDPs with convex %
uncertainty sets, as detailed in \autoref{apx:standard:RMDPs}.

\subsection{Robust Discounted Reward}
Given an RMDP $\cM$ and a fixed policy $\pi \in \Pi$, its \emph{robust value} is defined as
\[
    \Vpes^{\pi}_{\cM} = \inf_{\vec{u} \in \cU} \bE_{\pi,P_{\vec{u}}} \left[ \sum_{t=0}^\infty \gamma^t R(s_t,a_t)  \right].
\]
In words, this is the worst-case value of $\pi$ over all instantiations of the uncertainty. As before, we are interested in maximizing this value, \ie\ we are looking for the optimal robust value $\Vpes^*_{\cM} = \sup_{\pi \in \Pi} \Vpes^{\pi}_{\cM}$.

\begin{proposition}[Sufficient Policy Classes~\cite{DBLP:journals/mor/WiesemannKR13}]\label{prop:policy:classes}
The following policy classes are known to be sufficient for optimality, depending on the rectangularity and uncertainty set assumed.
\begin{description}
    \item[$(s,a)$-rectangular.] Under $(s,a)$-rectangularity, memoryless deterministic policies suffice, \ie $\Vpes^*_{\cM} = \max_{\pi \in \Pi^\text{MD}} \Vpes^{\pi}_{\cM}$.
    \item[$s$-rectangular and convex.] Under $s$-rectangularity, if each composing uncertainty set $\cU_s$ is convex, memoryless randomized policies suffice, \ie $\Vpes^*_{\cM} = \max_{\pi \in \Pi^\text{MR}} \Vpes^{\pi}_{\cM}$.
\end{description}
\end{proposition}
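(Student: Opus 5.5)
The proof goes through the robust Bellman operator, treating the RMDP as a zero-sum game between the agent and nature. For a memoryless randomized policy $\pi$ define
\[
(T^\pi V)(s) = \inf_{\vec{u}_s \in \cU_s} \sum_{a} \pi(s)(a)\Bigl(R(s,a) + \gamma \sum_{s'} \vec{u}(s,a,s') V(s')\Bigr),
\]
and let $T V(s) = \sup_{\pi(s) \in \cD(A)} (T^\pi V)(s)$. Under $s$-rectangularity nature's choice at $s$ does not constrain its choices at other states. Hence both operators are $\gamma$-contractions in the sup-norm, with unique fixed points $V^\pi$ and $V^\star$.

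\textbf{Step 1.} Show that $V^\pi = \Vpes^\pi_\cM$ for memoryless $\pi$. The inequality $\Vpes^\pi_\cM \ge V^\pi$ follows by induction over finite horizons: against any $\vec{u}$, the expected discounted reward is at least the value obtained by iterating $T^\pi$, since each local choice of $\vec{u}$ is at least as good for the agent as the infimum. For the reverse inequality, use compactness: each polytope $\cU_s$ is compact, so the infimum is attained by some $\vec{u}^\star_s$. The product $\vec{u}^\star$ lies in $\cU$ by rectangularity, and evaluating $\pi$ in the MDP $P_{\vec{u}^\star}$ gives exactly $V^\pi$.

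\textbf{Step 2.} Show that no policy, even a history-dependent one, exceeds $V^\star$. Fix any $\pi \in \Pi$. Nature can play a stationary strategy, choosing at each state the $\vec{u}_s$ that minimizes against $V^\star$ uniformly over all action distributions. Under this fixed $\vec{u}$ the RMDP becomes an ordinary MDP, in which the agent's value from any history is at most $V^\star$. Formally, for $(s,a)$-rectangular sets the minimizer can be chosen separately for each $a$, which gives $\sup_\pi \Vpes^\pi \le V^\star$ directly.

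For the $s$-rectangular case this step needs a minimax theorem. The map $(\mu, \vec{u}_s) \mapsto \sum_a \mu(a)(R + \gamma\, \vec{u}_{sa}^\intercal V)$ is bilinear on $\cD(A) \times \cU_s$, and both sets are compact and convex. Sion's (or von Neumann's) theorem therefore gives $\sup_\mu \inf_{\vec{u}_s} = \inf_{\vec{u}_s} \sup_\mu$. Let $\vec{u}^\star_s$ attain the outer infimum. In the MDP $P_{\vec{u}^\star}$ we then have $\max_a (R + \gamma\, \vec{u}^\star_{sa}{}^{\intercal} V^\star) = V^\star(s)$, so $V^\star$ is the optimal MDP value there. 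By the classical MDP result, no history-dependent policy beats it, so $\Vpes^\pi \le V^\star$ for every $\pi \in \Pi$.

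\textbf{Step 3.} Show the bound is attained. Take $\pi^\star(s)$ to be a maximizer of $\mu \mapsto \inf_{\vec{u}_s}(\ldots)$. This exists because the function is concave and upper semicontinuous (an infimum of linear functions) on the compact set $\cD(A)$. Then $T^{\pi^\star} V^\star = V^\star$, so by Step 1 we get $\Vpes^{\pi^\star} = V^\star$, and the supremum is a maximum over $\Pi^{\text{MR}}$.

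In the $(s,a)$-rectangular case the inner infimum decomposes into a sum over actions, $\sum_a \mu(a) \min_{\vec{u}_{sa}}(\ldots)$. This is linear in $\mu$, so it is maximized at a vertex of $\cD(A)$, which gives a deterministic $\pi^\star$.

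\textbf{Main obstacle.} The hardest step is Step 2 for history-dependent policies. One has to justify that nature's stationary best response against $V^\star$ already dominates every policy with memory. This is exactly where the minimax exchange and $s$-rectangularity are used: a single $\vec{u}$ must work against all action distributions simultaneously, and without rectangularity this fails.
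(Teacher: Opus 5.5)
Your proof is correct. The paper gives no proof of this proposition and only cites Wiesemann, Kuhn and Rustem, and your argument is the standard robust dynamic programming proof behind that result: a state-wise robust Bellman operator, a minimax exchange at each state (using convexity of $\cU_s$) to get a stationary worst-case nature against which $V^\star$ is the optimal MDP value, and linearity in the action distribution to get determinism under $(s,a)$-rectangularity.

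The only assumption you add beyond the statement is compactness of the $\cU_s$. This holds for the paper's polytopes. In general you can obtain it by passing to closures, since for each fixed policy the expected discounted reward is continuous in $\vec{u}$.
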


To study its complexity, we reformulate the optimization problem as a decision one where we take a threshold as input.
\begin{problem}[Robust discounted reward problem]\label{problem:rmdp:threshold}
    Given an RMDP $\cM$, is there a policy whose robust value is at least some given threshold $\kappa \in \mathbb{R}$, \ie, 
    \(
    \exists \pi \in \Pi \colon \Vpes_{\cM}^{\pi}(\sinit) \geq \kappa?
    \)
\end{problem}
Note that we are not restricting $\pi$ in this general version of the problem. We will mostly look at particular instances based on rectangularity and properties of the uncertainty set, where further assumptions can be made about the policy.

\subsection{Bellman Operator for \texorpdfstring{$(s,a)$}{(s,a)}-Rectangular RMDPs}

For this section, we fix an RMDP $\cM$.
Under $(s,a)$-rectangularity, as per \autoref{prop:policy:classes}, the optimal robust value of $\cM$ is given as
\[
\Vpes^*_\cM = \max_{\pi \in \Pi^{\text{MD}}} \inf_{\vec{u} \in \cU} \bE_{\pi,P_{\vec{u}}}\left[ \sum_{t=0}^{\infty} \gamma^{t} R(s_t,a_t) \right].
\]

The robust value function $\Vpes^*_{\cM} \colon S \to \mathbb{R}$, where we are making the initial state $s_\iota$ a parameter, admits a \emph{robust version of the standard Bellman equation}. To state this formally, we recall
the robust Bellman operator $\fB \colon (S \to \bR) \to (S \to \bR)$:
\[
\fB(\Vpes)(s) = \max_{a \in A} R(s,a) + \gamma \inf_{\vec{u} \in \cU_{(s,a)}} \left\{ \sum_{s' \in S} P_{\vec{u}}(s,a)(s') \Vpes(s') \right\}.
\]
Intuitively, under $(s,a)$-rectangularity, nature's choice for a probability distribution can be inserted directly into the Bellman equation as a \emph{local optimization problem} at the given state $s$ and action $a$. The operator behaves just as nicely as that for standard MDPs.
\begin{proposition}[{\cite[Theorem 3.2]{DBLP:journals/mor/Iyengar05}}]\label{prop:robust-bellman}
    The operator $\fB$ is a contraction mapping w.r.t. $\Linf$ and its unique fixed point is $\Vpes^*_\mathcal{M}$.
\end{proposition}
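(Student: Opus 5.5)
We prove the two claims of \Cref{prop:robust-bellman} in the standard way: first that $\fB$ is a $\gamma$-contraction in the $\Linf$ norm, so Banach's fixed point theorem gives a unique fixed point $\Vpes^\dagger$; then that $\Vpes^\dagger = \Vpes^*_\cM$.

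\textbf{Contraction.} Fix $V, W \colon S \to \bR$ and a state $s$. We use two elementary inequalities, $|\max_a f(a) - \max_a g(a)| \le \max_a |f(a)-g(a)|$ and $|\inf_{\vec u} f(\vec u) - \inf_{\vec u} g(\vec u)| \le \sup_{\vec u}|f(\vec u)-g(\vec u)|$. Applying both reduces the problem to bounding
\[
\gamma \sup_{\vec u \in \cU_{(s,a)}} \Bigl|\sum_{s'} P_{\vec u}(s,a)(s')\bigl(V(s')-W(s')\bigr)\Bigr|.
\]
Each $P_{\vec u}(s,a)$ is a probability distribution, so this quantity is at most $\gamma\norm{V-W}{\infty}$. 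Hence $\fB$ is a contraction. Since $(\bR^S,\Linf)$ is complete, $\fB$ has a unique fixed point $\Vpes^\dagger$.

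\textbf{Fixed point equals the robust value: lower bound.} To show $\Vpes^\dagger \le \Vpes^*_\cM$, let $\pi$ be the memoryless deterministic policy that picks the maximizing action in $\fB(\Vpes^\dagger)$. For any $\vec u \in \cU$, $(s,a)$-rectangularity lets us restrict $\vec u$ to the component $\cU_{(s,a)}$. There, $\sum_{s'} P_{\vec u}(s,a)(s')\Vpes^\dagger(s')$ is at least the infimum, so
\[
\Vpes^\dagger \le R^\pi + \gamma \vec P^\pi_{\vec u}\Vpes^\dagger
\]
componentwise. Iterating this inequality $n$ times and using monotonicity of the nonnegative matrix $\vec P^\pi_{\vec u}$ gives $\Vpes^\dagger \le \sum_{t<n}\gamma^t(\vec P^\pi_{\vec u})^t R^\pi + \gamma^n(\vec P^\pi_{\vec u})^n\Vpes^\dagger$. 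Letting $n\to\infty$ yields $\Vpes^\dagger \le V^\pi_{P_{\vec u}}$. Taking the infimum over $\vec u$ gives $\Vpes^\dagger \le \Vpes^\pi_\cM \le \Vpes^*_\cM$.

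\textbf{Fixed point equals the robust value: upper bound (main obstacle).} The reverse inequality $\Vpes^*_\cM \le \Vpes^\dagger$ is the hard step, because the policy may be history-dependent. It therefore suffices to show that for every $\pi \in \Pi$ and every $\epsilon>0$ there is a $\vec u$ with $V^\pi_{P_{\vec u}} \le \Vpes^\dagger + \epsilon$. For a memoryless $\pi$, choose $\vec u$ so that at each $(s,a)$ it is $\epsilon(1-\gamma)$-optimal in the inner infimum. This is possible because rectangularity makes the choices independent, and if $\cU_{(s,a)}$ is compact (e.g.\ a polytope) the infimum is attained. Then $V^\pi_{P_{\vec u}}$ is the fixed point of the policy's Bellman operator under $P_{\vec u}$, which is pointwise at most $\fB$ plus $\gamma\epsilon(1-\gamma)$. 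A contraction comparison then gives $V^\pi_{P_{\vec u}} \le \Vpes^\dagger+\epsilon$.

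For a history-dependent $\pi$, a single stationary $\vec u$ need not suffice, since nature in the standard framework of \cite{DBLP:journals/mor/Iyengar05} may also react to histories. We argue instead by a finite-horizon truncation. Let $\fB^n(0)$ be the $n$-step robust value, and show by backward induction on the horizon that against any $\pi$, nature can hold the expected $n$-step reward to at most $\fB^n(0)(s)$. At each history, nature chooses the distribution at the current $(s,a)$ minimizing the continuation value, and this choice is legitimate because of rectangularity. The truncation error is at most $\gamma^n \max|R|/(1-\gamma)$, and $\fB^n(0)\to\Vpes^\dagger$ by the contraction. If nature is required to be stationary (a single $\vec u$), we invoke \autoref{prop:policy:classes} to restrict to $\Pi^\text{MD}$ and use the memoryless argument above.
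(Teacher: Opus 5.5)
\paragraph{Review.} The paper gives no proof of this proposition; it imports it from Iyengar. Your contraction step and your lower bound $\Vpes^\dagger \le \Vpes^*_\cM$ via the greedy policy are correct. The memoryless case of the upper bound is also correct, with one precision: the comparison ``at most $\fB$ plus $\gamma\epsilon(1-\gamma)$'' holds only when the policy operator is applied to $\Vpes^\dagger$, the function for which $\vec{u}$ was chosen, but that is all the comparison needs.

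\paragraph{The gap: history-dependent policies.} In this paper, $\Vpes^\pi_\cM=\inf_{\vec{u}\in\cU}V^\pi_{P_{\vec{u}}}$ ranges over a single \emph{static} $\vec{u}$. Your truncation argument instead has nature minimize against $\fB^{n-t-1}(0)$ at time $t$. That is a time-varying, history-reacting nature, and it is not an element of $\cU$. Such a nature is at least as strong as a static one. So the quantity you bound by $\Vpes^\dagger$ is \emph{at most} $\inf_{\vec{u}\in\cU}V^\pi_{P_{\vec{u}}}$, which is the wrong direction for the inequality you need. That paragraph therefore does not prove the proposition as stated.

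Your fallback, invoking \Cref{prop:policy:classes}, is formally admissible, since that result is stated earlier as a cited fact. But it outsources exactly the step you identified as the main obstacle. Moreover, sufficiency of memoryless deterministic policies is usually obtained together with this very fixed-point characterization (greedy policies with respect to the fixed point are optimal). As a proof, the fallback is therefore circular in substance.

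\paragraph{The missing observation.} Your remark that ``a single stationary $\vec{u}$ need not suffice'' is false here: the $\vec{u}$ you already built works against every $\pi\in\Pi$.
\begin{itemize}
\item Choose each component $\vec{u}_{(s,a)}$ as an $\epsilon(1-\gamma)$-minimizer of $\sum_{s'}P_{\vec{u}}(s,a)(s')\Vpes^\dagger(s')$ over $\cU_{(s,a)}$. For polytopes an exact minimizer exists. This choice does not depend on $\pi$.
\item Fixing $\vec{u}$ yields an ordinary MDP $M(\vec{u})$. Its Bellman optimality operator $T$ satisfies $T(\Vpes^\dagger)\le\fB(\Vpes^\dagger)+\gamma\epsilon(1-\gamma)=\Vpes^\dagger+\gamma\epsilon(1-\gamma)$.
\item By monotonicity and $T(V+c\vec{1})=T(V)+\gamma c\vec{1}$, this gives $V^*_{M(\vec{u})}\le\Vpes^\dagger+\gamma\epsilon$.
\item Classical MDP theory, recalled in the background section, gives $V^\pi_{M(\vec{u})}\le V^*_{M(\vec{u})}$ for every history-dependent randomized $\pi$.
\item Hence $\Vpes^\pi_\cM\le\Vpes^\dagger+\gamma\epsilon$ for all $\pi\in\Pi$.
\end{itemize}
Intuitively, whatever the agent could learn about $\vec{u}$ from the history cannot beat the optimum of the MDP in which $\vec{u}$ is known.

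Equivalently, your backward induction goes through if nature minimizes against the fixed point $\Vpes^\dagger$ rather than against $\fB^{n-t-1}(0)$. Then $\sum_{k<t}\gamma^kR(s_k,a_k)+\gamma^t\Vpes^\dagger(s_t)$ is a supermartingale under any $\pi$, and nature's choice is static. This makes your proof self-contained. As a byproduct, your greedy policy is optimal, so you recover the $(s,a)$-rectangular part of \Cref{prop:policy:classes} rather than depending on it.
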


\section{The Complexity of the Robust Discounted Reward Problem}
In this section, we consider the computational complexity of the robust
discounted reward problem for the different kinds of rectangularity
constraints we introduced earlier. First, we make some observations about how
hard the problem is: \P-hard and at least as hard as finding the winner in parity games, a
problem known to be in $\NP \cap \coNP$ and $\UP \cap \coUP$~\cite{DBLP:journals/ipl/Jurdzinski98} and even solvable in quasi-polynomial
time \cite{calude}, but not known to be in \P.
\begin{proposition}\label{prop:lower-bound}
    The robust discounted reward problem is \P-hard, even under rectangularity
    and convex uncertainty assumptions. Moreover, determining the winner of a
    parity game reduces, in polynomial time, to the robust discounted reward
    problem.
\end{proposition}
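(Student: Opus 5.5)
The proof has two independent parts, and both go through known problems.

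\emph{\P-hardness.} Every MDP is a degenerate $(s,a)$-rectangular RMDP. Given an MDP $M$, take $\cU_{(s,a)} = \{\vec{p}_{sa}\}$ for every pair $(s,a)$. Each such set is a convex polytope, described by the equalities $\vec{u}(s,a,s') = P(s,a)(s')$, which we write as pairs of inequalities. The set is a singleton, so nature has no choice, and $\Vpes^\pi_\cM = V^\pi_M$ for every policy. \autoref{problem:mdp:threshold} is \P-hard~\cite{DBLP:journals/mor/PapadimitriouT87}, so \autoref{problem:rmdp:threshold} is \P-hard too. The encoding is clearly computable in logspace.

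\emph{Parity games.} I would go through the standard chain of reductions. Jurdzi\'nski's reduction~\cite{DBLP:journals/ipl/Jurdzinski98} takes parity games to mean-payoff games. Zwick and Paterson's reduction takes mean-payoff games to discounted-payoff games: pick a discount factor $\gamma$ close enough to $1$, with polynomially many bits, so that the sign of the discounted value decides the winner. The remaining step is to reduce a deterministic turn-based discounted game with rewards to the robust discounted reward problem.

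In that game, Max owns states $S_{\max}$ and Min owns states $S_{\min}$. Each Max state keeps its edges as actions $a$, with deterministic transitions, so $\cU_{(s,a)}$ is a singleton Dirac distribution. Each Min state $s$ gets one dummy action $\ast$ with reward $R(s,\ast)$. Its uncertainty set is $\cU_{(s,\ast)} = \{\vec{u} \geq 0 \mid \sum_{s'} \vec{u}(s,\ast,s') = 1,\ \vec{u}(s,\ast,s') = 0 \text{ for } s' \notin \mathrm{succ}(s)\}$, the simplex over Min's successors. This set is a polytope in halfspace representation of polynomial size, and the RMDP is $(s,a)$-rectangular.

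Next I have to check that the robust value equals the game value. \autoref{prop:robust-bellman} says $\Vpes^*_\cM$ is the unique fixed point of $\fB$. At a Min state, the inner infimum of a linear function over a simplex is attained at a vertex, so it equals $\min_{s'\in\mathrm{succ}(s)} V(s')$. Hence $\fB$ coincides with the Shapley operator of the discounted game, whose fixed point is the game value. Two details need care. Rewards in this setup sit on states or actions, so an edge-weighted game needs an intermediate state per edge, with the weight moved onto that state and the discount adjusted. The threshold $\kappa$ also has to be computed from the Zwick--Paterson bound as a rational of polynomial size.

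The main obstacle is making the discounting consistent. Inserting intermediate states changes the number of steps along a path, which breaks the correspondence between discounted sums. The fix I would try first is to make every game edge take exactly two RMDP steps: split every edge, including Max's, by an intermediate state. Then the game with discount $\gamma$ corresponds to the RMDP with discount $\sqrt{\gamma}$. Since $\sqrt{\gamma}$ may be irrational, I would instead choose the RMDP discount $\beta$ rational and set the game discount to $\beta^2$. Zwick--Paterson only needs the discount sufficiently close to $1$, so any rational $\beta$ with $\beta^2$ close enough to $1$ will do. I would place the weight on the intermediate state and scale it by $1/\beta$.
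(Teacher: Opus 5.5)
Your proof is correct and follows the same overall route as the paper. \P-hardness comes from MDPs being degenerate RMDPs with singleton uncertainty sets. Parity games go through the known chain to discounted games and then to an $(s,a)$-rectangular RMDP whose uncertainty sets are simplices over the adversary's successors. The two arguments differ in two technical respects.

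\emph{Game model and edge splitting.} The paper starts from discounted-sum games played on an automaton. There the protagonist picks $a$, the antagonist picks a $\Delta$-successor of $(s,a)$, and the reward is $R(s,a)$. Game steps and RMDP steps therefore match one-to-one, so no edge splitting or change of discount is needed. Your splitting, with RMDP discount $\beta$ and game discount $\beta^2$, is sound: composing the Bellman equations at $s$ and at $e_{ss'}$ gives $V(s)=\max_{s'}$ (resp.\ $\min_{s'}$) of $w(s,s')+\beta^2 V(s')$. It handles arbitrary edge-weighted turn-based games. For parity games, however, it is unnecessary. In Jurdzi\'nski's reduction each edge inherits a weight determined by the priority of its source vertex. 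So a single reward $R(s,\ast)$ already suffices at Min vertices, and the turn-based game embeds step for step.

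\emph{Value equality.} The paper argues policy by policy. The robust value of a fixed memoryless policy is attained at a vertex of the uncertainty polytope, and these vertices are exactly the antagonist's memoryless deterministic strategies (the $\tau$-vertex lemma). Together with memoryless determinacy of discounted games, this gives $\mathrm{aVal}^\pi_{\cM}=\Vpes^\pi_{\cM}$ and hence equal optimal values. You instead identify $\fB$ with the Shapley operator of the game and use uniqueness of fixed points. Your argument is shorter and needs only Shapley's fixed-point characterization of the game value. The paper's argument yields the finer per-policy correspondence between robust values and antagonistic values.
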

\begin{proof}[Proof sketch]
    The problem is \P-hard since the discounted reward problem for classical
    MDPs is \P-hard \cite{DBLP:journals/mor/PapadimitriouT87} and MDPs are a
    special case of (rectangular, convex) RMDPs.

    For the second part of the claim, consider an arbitrary \emph{discounted-sum
    game}---in contrast to MDPs, here, the successor state after playing $a$ from $s$ is resolved
    antagonistically by an adversary. We can construct an $(s,a)$-rectangular
    RMDP with uncertainty sets $\cU_{(s,a)}$ being the simplex of the deterministic distributions over successors of $s$ after playing $a$.
    By memoryless determinacy of discounted-sum games \cite{DBLP:journals/tcs/ZwickP96}, the robust value of the
    RMDP can be shown to coincide with the ``value'' of the discounted-sum
    game. Parity-game hardness is then obtained by appealing to the reduction from parity games to
    discounted-sum games~\cite{DBLP:journals/ipl/Jurdzinski98}.
    The full proof can be found in \Cref{apx:games}.
\end{proof}

\noindent
Concerning upper bounds, we establish that for $(s,a)$-rectangular RMDPs, the problem is in \NP. The crux of the argument relies on proving that the robust value of a (guessed) memoryless deterministic policy can be evaluated in polynomial time. For $s$-rectangular RMDPs, where optimal policies may need randomization, we instead resort to an encoding into the theory of the reals, placing it in \PSPACE.

To conclude, we leverage our policy evaluation results to derive a robust policy iteration procedure for $(s,a)$-rectangular convex-uncertainty RMDPs to obtain optimal robust policies.

\subsection{The Case of \texorpdfstring{$(s,a)$}{(s,a)}-Rectangularity and Convex Polytopic Uncertainty}
Let $\cM$ be an $(s,a)$-rectangular RMDP with convex uncertainty sets given as convex polytopes $\{ \cU_{(s,a)} = (\vec{D}_{sa},\vec{b}_{sa}) \subseteq \bR_{\geq 0 }^{\{s\} \times \{a\} \times S} \mid s \in S, a \in A\}$. For complexity matters, we suppose the entries of all $\vec{D}_{sa}$ and $\vec{b}_{sa}$ are rationals represented as pairs of binary-encoded integers. We also suppose we are given a threshold $\kappa \in \mathbb{Q}$. The last part of the input we consider is a
memoryless deterministic policy $\pi \colon S \to A$.

\begin{lemma}\label{thm:rmdp:policy:eval:in:P}
    Determining whether $\Vpes_{\cM}^{\pi}(\sinit) \geq \kappa$ holds true can be done in polynomial time.
\end{lemma}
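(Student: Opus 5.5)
Fixing $\pi$ turns $\cM$ into an $(s,a)$-rectangular robust Markov chain, so my plan is to write its robust value as the solution of a single polynomial-size LP obtained by LP duality (robust linear programming). First, specialize \Cref{prop:robust-bellman} to the one-action RMDP whose only action at each $s$ is $\pi(s)$. Its robust value $\Vpes^\pi_\cM$ is then the unique fixed point of
\[
\fB^\pi(\vec{v})(s) = R(s,\pi(s)) + \gamma \min_{\vec{u} \in \cU_{(s,\pi(s))}} \vec{u}^\intercal \vec{v}.
\]
For memoryless deterministic $\pi$, nature's optimal response is memoryless, since this is just an $(s,a)$-rectangular RMDP with singleton action sets. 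By monotonicity and the contraction property, $\Vpes^\pi_\cM$ is the \emph{greatest} $\vec{v}$ satisfying $\vec{v} \leq \fB^\pi(\vec{v})$. Indeed, if $\vec{v} \le \fB^\pi \vec{v}$, then iterating gives $\vec{v} \le (\fB^\pi)^k \vec{v} \to \Vpes^\pi_\cM$. Hence $\Vpes^\pi_\cM(\sinit) = \max \{\vec{v}(\sinit) \mid \vec{v}(s) \le R(s,\pi(s)) + \gamma \min_{\vec{u} \in \cU_{(s,\pi(s))}} \vec{u}^\intercal\vec{v}\ \forall s\}$.

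The constraints above contain an inner minimization, so the second step removes it by dualizing. For fixed $\vec{v}$, the inner LP is $\min\{\vec{v}^\intercal \vec{u} \mid \vec{D}_{sa}\vec{u} \le \vec{b}_{sa}\}$. Nonnegativity and summing to one are already in, or can be added to, $(\vec{D}_{sa},\vec{b}_{sa})$. The polytope is nonempty and bounded, so strong duality gives
\[
\min_{\vec{u}} \vec{v}^\intercal\vec{u} = \max\{-\vec{b}_{sa}^\intercal \vec{y} \mid \vec{D}_{sa}^\intercal \vec{y} = -\vec{v},\ \vec{y} \geq 0\}.
\]
The constraint $\vec{v}(s) \le R(s,a) + \gamma \min_{\vec{u}} \vec{u}^\intercal\vec{v}$ therefore holds iff there exists $\vec{y}_s \ge 0$ with $\vec{D}_{s\pi(s)}^\intercal \vec{y}_s = -\vec{v}$ and $\vec{v}(s) \le R(s,\pi(s)) - \gamma \vec{b}_{s\pi(s)}^\intercal \vec{y}_s$. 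This condition is linear in the joint variables $(\vec{v},\vec{y}_s)$.

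Third, I collect these constraints for all $s$ into one LP that maximizes $\vec{v}(\sinit)$ over $(\vec{v}, (\vec{y}_s)_{s\in S})$. It has $|S|$ plus the total number of rows of all $\vec{D}_{sa}$ variables, and its coefficients are entries of the input or $\gamma$ times them, so the bit size stays polynomial. The LP is feasible, since $\vec{v} = \Vpes^\pi_\cM$ with optimal duals works, and it is bounded by the greatest-fixed-point argument. Its optimum is exactly $\Vpes^\pi_\cM(\sinit)$. I then solve it in polynomial time with the cited LP algorithms and compare the optimum with $\kappa$.

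The main obstacle I expect is justifying the greatest-post-fixed-point characterization and its exact correspondence with the dualized LP. In particular, I must check that nature may use history-dependent $\vec{u}$ without lowering the value below the fixed point. This follows because $\cU$ is a product and the $\inf$ in the definition of the robust value is attained by a stationary choice via \Cref{prop:robust-bellman}. I must also handle possibly redundant or equality constraints in the halfspace description when applying strong duality, which nonemptiness and boundedness of each $\cU_{(s,a)}$ guarantee.
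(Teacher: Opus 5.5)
Your proposal is correct and follows essentially the paper's route: you characterize $\Vpes^\pi_{\cM}$ as the optimum of the robust LP with constraints $\vec{v}(s) \le R(s,\pi(s)) + \gamma\, \vec{u}^\intercal \vec{v}$ for all $\vec{u} \in \cU_{(s,\pi(s))}$, then dualize the inner minimization, which gives the same polynomial-size LP as the paper (your $\gamma \vec{y}_s$ is the paper's $\vec{w}_s$). The differences are cosmetic. You justify maximality via the greatest post-fixed point of the monotone contraction $\fB^\pi$ instead of the nonnegativity of $(\vec{I} - \gamma P^\pi_{\vec{u}})^{-1}$, you derive the dual explicitly instead of citing the robust-optimization table, and you maximize $\vec{v}(\sinit)$ instead of $\vec{c}^\intercal \vec{v}$.
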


We remark that this result generalizes \cite[Proposition 3]{chk2013}. In fact, it can be
proven by adapting the separation-oracle argument from \cite{chk2013}. Below,
we give an alternative argument using results from \emph{robust linear
optimization} \cite{GORISSEN2015124}.

\begin{proof}
    For all $s \in S$, let $\vec{u}_s$ be an arbitrary element of the uncertainty set $\cU_s = \bigtimes_{a \in A} \cU_{(s,a)}$. We introduce some notation to talk about the (uncertain) probabilities and rewards of the DTMC induced by $\pi$ and the $\vec{u}_s$.
    Write $\vec{p}^{\vec{u}\pi}_s$ for the vector from $\bR_{\geq 0}^S$ satisfying 
    $\vec{p}^{\vec{u}\pi}_s(s') = \vec{u}_s(s,\pi(s),s')$, for all $s' \in S$, and
    set $R^\pi(s) = R(s,\pi(s))$.

    We will be working with the definition of the robust value
    $\Vpes^\pi_{\cM} \colon S \to \bR$ where we consider it a function of the
    initial state. 
    Now, by definition of $\Vpes^\pi_{\cM}$ and linearity of
    the expectation operator, we have the following.
    \[
    \Vpes^{\pi}_{\cM}(s) = \inf_{\vec{u}_s \in \cU_s} R^\pi(s) + \gamma \sum_{s'
    \in S} \vec{p}^{\vec{u}\pi}_s(s') \Vpes^{\pi}_{\cM}(s'), \text{ for all } s \in S.
    \]
    If we write $P_{\vec{u}}^\pi$ for the matrix such that $P_{\vec{u}}^\pi(s)(s') = \vec{p}_s^{\vec{u}\pi}(s')$ for all $s,s' \in S$, then we can rephrase the above in matrix form as $\Vpes^\pi_{\cM} = \inf_{\vec{u} \in \bigtimes_{s \in S} \cU_s} R^\pi + \gamma P^\pi_{\vec{u}} \Vpes^\pi_{\cM}$. By definition of $\Vpes^\pi_{\cM}$ and $V^\pi_M$ for an MDP $M$, we also have that $\Vpes^\pi_{\cM} = \inf_{\vec{u} \in \bigtimes_{s \in S} \cU_s} V^\pi_{M(\vec{u})}$ where $M(\vec{u})$ is the MDP obtained from $\cM$ by fixing $\vec{u}$. Now, from \autoref{eq:MDP:policy:eval} we get that $V^\pi_{M(\vec{u})} = (\vec{I} - \gamma P^\pi_{\vec{u}})^{-1} R^\pi$. In particular, the matrix $\vec{I} - \gamma P^\pi_{\vec{u}}$ is guaranteed to be nonsingular, thus invertible, for all $\vec{u}$.
    
    Since each $\cU_s$ is a product of convex polytopes, hence also a convex polytope, we know that the infimum is achieved at some vertex. This means the infimum can be rewritten as a minimum over the (finite) set $\mathrm{vert}(\cU_s)$ of vertices of $\cU_s$.
    \[
    \Vpes^{\pi}_{\cM}(s) = \min_{\vec{u}_s \in \mathrm{vert}(\cU_s)} R^\pi(s) +
    \gamma \sum_{s' \in S} \vec{p}^{\vec{u}\pi}_s(s') \Vpes^{\pi}_{\cM}(s'), \text{ for
    all } s \in S.
    \]
    The equation above implies the following
    inequality. Moreover, we know that for all $s \in S$ there is some $\mu(s) \in
    \mathrm{vert}(\cU_s)$ that makes it an equality.
    \begin{equation}\label{eqn:feas-pol-eval}
      \Vpes^{\pi}_{\cM}(s) \leq R^\pi(s) +
      \gamma \sum_{s' \in S} \vec{p}^{\vec{u}\pi}_s(s') \Vpes^{\pi}_{\cM}(s'), \text{ for
      all } s \in S \text{ and all } \vec{u}_s \in \cU_s.
    \end{equation}
    
    Let $\vec{c} \in \bR^S_{> 0}$ and $\vec{v} \in \bR^S$ be a vector of
    (optimization) variables. The unique optimal solution $\vec{v}$ to the
    following mathematical program is such that $\vec{v}(s) =
    \Vpes^{\pi}_{\cM}(s)$, for all $s \in S$.
    \begin{equation}
    \begin{array}{ll}
        \text{maximize} & \vec{c}^\intercal \vec{v} \\
        \text{subject to} & \bigwedge_{s \in S} \bigwedge_{\vec{u}_s \in
        (\vec{D}_{s\pi(s)},\vec{b}_{s\pi(s)})}
        (\vec{e}_s - \gamma  \vec{u}_s(s,\pi(s)))^\intercal \vec{v} \leq R^\pi(s)
    \end{array} \label{eqn:robust-lp-eval}
    \end{equation}
    Indeed, \autoref{eqn:feas-pol-eval}, tells us $\vec{v}(s) =
    \Vpes^{\pi}_{\cM}(s)$ is a feasible assignment. Moreover, any optimal solution must satisfy $
        (\vec{I} - \gamma P_{\vec{u}}^\pi) \vec{v} \leq R^\pi$ for all $\vec{u} \in \bigtimes_{s \in S} \cU_s$. Then, $\vec{v} \leq  (\vec{I} - \gamma P_{\vec{u}}^\pi)^{-1}R^\pi$ for all $\vec{u} \in \bigtimes_{s \in S} \cU_s$, and in particular $\vec{v}(s) \leq ( (\vec{I} - \gamma P_{\mu(s)}^\pi)^{-1}R^\pi)(s) = \Vpes^\pi_{\cM}(s)$ for all $s \in S$, where the last equality follows from our choice of $\mu$ to make \autoref{eqn:feas-pol-eval} an equality. 
        Since $\vec{c} \in \bR_{> 0}^S$, we conclude $\Vpes^\pi_{\cM}$ is just as good as any optimal solution $\vec{v}$.

    \autoref{eqn:robust-lp-eval} is a  \emph{robust LP} with \emph{polytopic uncertainty}, already in standard form~\cite{GORISSEN2015124}.
    Hence, we can directly apply dualization techniques to replace the infinite set of constraints by a small number of constraints, yielding the following standard LP~\cite[Table 1]{GORISSEN2015124}.
    \begin{equation}
    \begin{array}{ll}
        \text{minimize} &-\vec{c}^\intercal \vec{v}\\
        \text{subject to} & \displaystyle\bigwedge_{ s \in S} \begin{cases}
            \vec{e}_s^\intercal \vec{v} + (\vec{b}_{s\pi(s)})^\intercal \vec{w}_s \leq R^\pi(s)  \\
            (-\vec{D}_{s\pi(s)})^\intercal \vec{w}_s = \gamma \vec{v} \\
            \vec{w}_s \geq 0 
        \end{cases} 
    \end{array}\label{eq:dtmc:robust:lp:classical-dual}
    \end{equation}
    This LP is polynomial in the input, particularly in the convex polytopes given in their halfspace representation as tuples $(\vec{D}_{s\pi(s)},\vec{b}_{s\pi(s)})$.
    Let $m$ be the maximum number of rows of any $\vec{D}_{s\pi(s)}$ and $\vec{b}_{s\pi(s)}$, for all $s$.
    The LP has at most $(m+1)|S|$ variables, $\vec{w}_s \in \bR^k$ with $k \leq m$ for each $s \in S$ and $\vec{v} \in \bR^S$, and $(2m+1)|S|$ constraints.
\end{proof}

Now, we can leverage the fact that deterministic policies suffice. This allows us to have a guess-and-check procedure for the robust discounted reward problem.

\begin{theorem}\label{thm:rmdp:sa:threshold:NP}
    The robust discounted reward problem (\autoref{problem:rmdp:threshold})  is in \NP{}, assuming $(s,a)$-rectangularity and convex polytopic uncertainty.
\end{theorem}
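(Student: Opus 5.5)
The plan is a standard guess-and-check argument built on \Cref{prop:policy:classes} and \Cref{thm:rmdp:policy:eval:in:P}. By \Cref{prop:policy:classes}, under $(s,a)$-rectangularity memoryless deterministic policies suffice. Hence $\sup_{\pi \in \Pi} \Vpes^{\pi}_{\cM}(\sinit) = \max_{\pi \in \Pi^\text{MD}} \Vpes^{\pi}_{\cM}(\sinit)$. In particular, the supremum over all (history-dependent, randomized) policies is attained by some $\pi \colon S \to A$. Consequently, there exists $\pi \in \Pi$ with $\Vpes^\pi_{\cM}(\sinit) \geq \kappa$ if and only if there exists $\pi \in \Pi^\text{MD}$ with the same property. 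One direction is trivial since $\Pi^\text{MD} \subseteq \Pi$. For the other, any witness $\pi$ satisfies $\kappa \leq \Vpes^\pi_{\cM}(\sinit) \leq \Vpes^*_{\cM}(\sinit)$, and this optimum is achieved by a memoryless deterministic policy.

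The certificate is therefore a memoryless deterministic policy $\pi$, encoded as a table assigning one action to each state. Its size is $\cO(|S| \log |A|)$, which is polynomial in the input. The verifier runs the polynomial-time procedure of \Cref{thm:rmdp:policy:eval:in:P} on $\cM$, $\pi$ and $\kappa$. Concretely, it builds the dualized LP~\eqref{eq:dtmc:robust:lp:classical-dual}, whose size is polynomial in the halfspace representations $(\vec{D}_{s\pi(s)},\vec{b}_{s\pi(s)})$. It solves this LP in polynomial time and compares the entry $\vec{v}(\sinit)$ of the optimum with $\kappa$, accepting iff $\vec{v}(\sinit) \geq \kappa$. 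Because all entries are rational with polynomially bounded bit size, the optimal solution is rational of polynomial bit size, so the comparison is exact.

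The only delicate point is the sufficiency of memoryless deterministic policies for the general, unrestricted formulation of \autoref{problem:rmdp:threshold}. That point is exactly what \Cref{prop:policy:classes} supplies, so I take it as given. Correctness of the verifier is \Cref{thm:rmdp:policy:eval:in:P}. Combining the polynomial-size certificate with the polynomial-time verifier places the problem in \NP.
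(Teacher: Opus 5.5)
Your proposal is correct and follows the paper's proof: guess a memoryless deterministic policy (sufficient by \Cref{prop:policy:classes}) and check it in polynomial time with the robust-LP evaluation of \Cref{thm:rmdp:policy:eval:in:P}. Your extra remarks, namely the equivalence of the unrestricted and memoryless deterministic threshold questions, the certificate size, and the exact rational comparison, only spell out details the paper leaves implicit.
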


\begin{proof}
By \autoref{thm:rmdp:policy:eval:in:P} we have that robust policy evaluation of a memoryless deterministic policy is in \P.
We also know that memoryless deterministic policies are sufficient for discounted reward in $(s,a)$-rectangular RMDPs by \autoref{prop:policy:classes}.
Since each such policy can be represented using polynomial space, we can thus nondeterministically guess a policy and evaluate it in polynomial time.
We conclude that the problem is in \NP.
\end{proof}

\subsection{The Case of \texorpdfstring{$s$}{s}-Rectangularity and Convex Polytopic Uncertainty}
This time we let $\cM$ be an $s$-rectangular RMDP with convex uncertainty sets given as convex polytopes $\{\cU_s = (\vec{D}_s,\vec{b}_s) \subseteq \bR^{\{s\} \times A \times S} \mid s \in S\}$. Once more we suppose rationals given as pairs of binary-encoded integers, and we take a threshold $\kappa \in \mathbb{Q}$ as input.

\begin{theorem}\label{thm:rmdp:s:threshold}
    The robust discounted reward problem (\autoref{problem:rmdp:threshold}) is in \PSPACE{}, assuming $s$-rectangularity and convex polytopic uncertainty.
\end{theorem}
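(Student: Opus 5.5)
We reduce the problem to deciding a sentence of the first-order theory of the reals. That theory is decidable in \PSPACE{} (Canny; Renegar), and for a fixed number of quantifier alternations its complexity is polynomial in the number of variables, polynomials and bit-size. By \autoref{prop:policy:classes}, memoryless randomized policies suffice under $s$-rectangularity with convex $\cU_s$. Hence the question is whether some $\pi \in \Pi^\text{MR}$ satisfies $\Vpes^\pi_\cM(\sinit) \geq \kappa$.

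We encode $\pi$ by real variables $x_{s,a}$, with constraints $x_{s,a} \geq 0$ and $\sum_a x_{s,a} = 1$. For fixed $\pi$ the problem is an $(s,a)$-rectangular-like robust evaluation. By $s$-rectangularity, the robust value of $\pi$ is the unique fixed point of the contraction
\[
\fB^\pi(\vec{v})(s) = \inf_{\vec{u}_s \in \cU_s} \sum_{a} x_{s,a}\Bigl(R(s,a) + \gamma \sum_{s'} \vec{u}_s(s,a,s')\vec{v}(s')\Bigr).
\]
This is the standard $s$-rectangular robust policy evaluation equation of Wiesemann et al.; it follows, as in the proof of \autoref{thm:rmdp:policy:eval:in:P}, from linearity and from the independence of nature's choices across states. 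We then follow the argument of \autoref{thm:rmdp:policy:eval:in:P}: $\Vpes^\pi_\cM$ is the componentwise-largest $\vec{v}$ satisfying
\[
\vec{v}(s) \leq \sum_a x_{s,a}\Bigl(R(s,a)+\gamma\sum_{s'}\vec{u}_s(s,a,s')\vec{v}(s')\Bigr)
\]
for all $s$ and all $\vec{u}_s \in \cU_s$. Feasibility follows from the fixed point. Maximality follows because any feasible $\vec{v}$ satisfies $\vec{v} \leq \fB^\pi(\vec{v})$, so monotonicity and contraction give $\vec{v} \leq \Vpes^\pi_\cM$.

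The formula is therefore
\[
\exists \vec{x}\,\exists \vec{v}:\ \mathrm{Pol}(\vec{x}) \wedge \vec{v}(\sinit)\geq\kappa \wedge \bigwedge_{s}\forall \vec{u}_s\bigl(\vec{D}_s\vec{u}_s\leq\vec{b}_s \rightarrow \vec{v}(s)\leq \textstyle\sum_a x_{s,a}(R(s,a)+\gamma\sum_{s'}\vec{u}_s(s,a,s')\vec{v}(s'))\bigr).
\]
Its correctness rests on the observation that some feasible $\vec{v}$ with $\vec{v}(\sinit)\geq\kappa$ exists if and only if $\Vpes^\pi_\cM(\sinit)\geq\kappa$. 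Soundness holds because every feasible $\vec{v}$ lies below $\Vpes^\pi_\cM$; completeness holds by taking $\vec{v}=\Vpes^\pi_\cM$.

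The formula has one quantifier alternation. It has $O(|S||A|)$ existential variables and $O(|S|^2|A|)$ universal ones, and its polynomials have degree at most $3$ (products $x\cdot u\cdot v$). Coefficients are the given rationals, so it has polynomial size. Deciding it is then in \PSPACE{} (indeed in the $\exists\forall\mathbb{R}$ fragment, contained in \PSPACE{}). Alternatively, one could eliminate the inner $\forall$ by LP duality for fixed $\vec{x},\vec{v}$, as in \autoref{eq:dtmc:robust:lp:classical-dual}: nature's inner minimization is linear in $\vec{u}_s$, and dualizing introduces $\vec{w}_s \geq 0$ with equalities bilinear in $\vec{x},\vec{v}$. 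This gives a purely existential sentence and hence membership in $\exists\mathbb{R}\subseteq\PSPACE{}$. I would present this duality version as a strengthening if it goes through.

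The main obstacle is justifying the robust fixed-point characterization for randomized policies under $s$-rectangularity. Specifically, we need that $\Vpes^\pi_\cM$ equals the fixed point of $\fB^\pi$, with the infimum taken per state over $\cU_s$ jointly for all actions. This in turn requires that nature may be taken memoryless/stationary against a memoryless $\pi$. I would appeal to Wiesemann et al., as \autoref{prop:policy:classes} does; failing that, I would argue via the induced $(s)$-rectangular robust MDP in which nature plays and $\pi$ is fixed. That is an MDP for nature with compact convex action sets $\cU_s$, so stationary deterministic choices suffice and the Bellman equation holds by \autoref{prop:robust-bellman}-style contraction.
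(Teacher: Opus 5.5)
Your proposal is correct and follows the paper's proof essentially step for step: memoryless randomized policies via \autoref{prop:policy:classes}, $\Vpes^\pi_{\cM}$ characterized as the largest $\vec{v}$ satisfying the universally quantified robust constraints, and a polynomial-size $\exists\forall$ sentence of degree $3$. One correction to your citation: the full first-order theory of the reals is only known to be in EXPSPACE, and its running time stays exponential in the number of variables even with bounded alternations; \PSPACE{} membership comes from the bounded number of quantifier alternations, which is your situation. Your optional dualization also goes through and is worth keeping. For fixed $\vec{x},\vec{v}$, the inner problem is an LP over the polytope $\cU_s$, which is bounded and, as the paper also implicitly assumes, nonempty. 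Strong duality then replaces $\forall \vec{u}_s$ by $\exists \vec{w}_s \geq \vec{0}$ with $\vec{D}_s^\intercal \vec{w}_s = -\bigl(\gamma\, x_{s,a}\vec{v}(s')\bigr)_{a,s'}$ and $\vec{v}(s) + \vec{b}_s^\intercal \vec{w}_s \leq \sum_{a} x_{s,a} R(s,a)$. This yields membership in $\exists\mathbb{R}$, which sharpens the paper's closing $\exists\forall$ remark.
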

\begin{proof}
    By \autoref{prop:policy:classes}, it suffices to focus on memoryless randomized policies. Let $\pi \colon S \to \cD(A)$ be such a policy. Then, by repeating the first part of the proof of \autoref{thm:rmdp:policy:eval:in:P} for $s$-rectangular uncertainty sets, we establish that the unique optimal solution $\vec{v}$ to the following mathematical program is such that $\vec{v}(s) = \Vpes^\pi_{\cM}(s)$, for all $s \in S$.
    \begin{equation}
    \begin{array}{ll}
        \text{maximize} & \vec{c}^\intercal \vec{v} \\
        \text{subject to} & \bigwedge_{s \in S} \bigwedge_{\vec{u}_s \in
        (\vec{D}_{s},\vec{b}_{s})}
        (\vec{e}_s^\intercal - \gamma  \pi(s)^\intercal\vec{u}_s(s)) \vec{v} \leq R^\pi(s)
    \end{array} \label{eqn:robust-lp-srect-eval}
    \end{equation}
    \autoref{eqn:robust-lp-srect-eval} is a bit different from \autoref{eqn:robust-lp-eval} since $\pi$ is now memoryless randomized.
    Note that $\pi(s)$, interpreted as a vector $\bR^A$, is right-multiplied by $\vec{u}_s(s)$ interpreted as a matrix $\bR^{A \times S}$. Also, $R^\pi(s)$ is now set to $\sum_{a \in A} \pi(s)(a) R(s,a)$.

    It follows that the existence of a feasible assignment of $\vec{v}$ such that $\vec{v}(s_\iota) \geq \kappa$ suffices to conclude $\Vpes^\pi_{\cM}(s_\iota) \geq \kappa$. Recall that $\pi$ was chosen to be an arbitrary memoryless randomized policy. Towards encoding this in a logical statement, let $\vec{x} \in \bR^{S \times A}$ be a vector of $|S||A|$ real-valued variables. So far we have established that $\Vpes^*_{\cM}(s_\iota) \geq \kappa$ if and only if there exists an assignment of $\vec{v}$ and $\vec{x}$ that satisfies the following.
    \begin{gather*}
        \vec{v}(s_\iota) \geq \kappa \land \bigwedge_{s \in S} \left(\sum_{a \in A} \vec{x}(s)(a) = 1 \land \bigwedge_{a \in A} \vec{x}(s)(a) \geq 0\right) \text { and }\\
        \forall \vec{u} : \bigwedge_{s \in S} \left( \vec{D}_s \vec{u} \leq \vec{b}_s \rightarrow (\vec{e}_s^\intercal - \gamma \vec{x}(s)^\intercal \vec{u}(s))\vec{v} \leq \sum_{a \in A} R(s,a) \vec{x}(s)(a)\right).
    \end{gather*}
    This is a sentence in the first-order theory of the reals with order, multiplication, and addition. Given such a sentence of fixed quantifier alternation (in prenex form), determining its truth value is known to be in \PSPACE{}  \cite{basu2006algorithms,DBLP:journals/mst/SchaeferS24}. This concludes the proof of the claim, but we note that we even got a sentence from the $\exists \forall$-fragment of the theory, which is rather low in the recently explored hierarchy of complexity classes induced by the theory of the reals \cite{DBLP:journals/mst/SchaeferS24}.    
\end{proof}

\subsection{Robust Policy Iteration with Exact Evaluation}

Standard policy iteration can be extended to robust policy iteration for $(s,a)$-rectangular RMDPs with convex polytopic uncertainty by replacing the policy evaluation step from \autoref{def:policy:iteration} with the solution of the linear program from \autoref{eq:dtmc:robust:lp:classical-dual}.

\begin{definition}[Robust policy iteration]\label{def:robust:policy:iteration}
    Let $\pi \colon S \to A$ be an arbitrary policy and $A_s^* = A$ be the set of possibly optimal actions for every state $s$. %
    Repeat the following:
    \begin{enumerate}
        \item Evaluate the current policy $\pi$ by computing its robust value $\Vpes_{\cM}^\pi$ by solving \autoref{eq:dtmc:robust:lp:classical-dual}. 
        \item Compute the normalized values
        \begin{align*}
        \Deltapes^\pi_{sa} &= \left(R(s,a) + \gamma \inf_{\vec{u} \in \cU_{(s,a)}}\sum_{s' \in S} P_{\vec{u}}(s,a)(s')\Vpes_{\cM}^\pi(s')  \right) - \Vpes_{\cM}^{\pi}(s), %
        \end{align*}
        \item Determine the improving actions:
        $
        A_s^{+} = \{a \in A_s^* \mid \Deltapes^\pi_{sa} > 0\}
        $.
         If $A_s^+ = \emptyset$ for all $s \in S$, return $\pi$ and $\Vpes_{\cM}^\pi$. 
        Otherwise, update $\pi$ such that 
        $
        \forall s \in S \colon \pi(s) = \argmax_{a \in A_s^*} \Deltapes^{\pi}_{sa}$.
        
        \item %
        Prune suboptimal actions by computing the new set of possibly optimal actions as follows:
        \[
        A_s^* = \left\{a \in A_s^* \midd \Deltapes^\pi_{sa} \geq \max_{a' \in A} \left\{ \Deltapes^\pi_{sa'} \right\} - \frac{\gamma}{1-\gamma} \left( \max_{s' \in S} \max_{a' \in A} \left\{ \Deltapes^\pi_{s'a'} \right\} - \min_{s' \in S} \max_{a' \in A} \left\{ \Deltapes^\pi_{s'a'} \right\} \right) \right\}.
        \]
    \end{enumerate}
\end{definition}

\begin{theorem}\label{thm:rpi:polytime}
    Robust policy iteration returns an optimal policy $\pi \in \Pi^\text{MD}$ in at most $T(|S|(|A|-1))$ iterations, with $T = \lfloor \nicefrac{|S|}{(1-\gamma)} \log (\nicefrac{|S|^2}{(1-\gamma)}) \rfloor + 1$, and $(m+n+L)^{\cO(1)}$ operations per iteration, where $m$ is the number of inequalities, $n$ the number of variables, and $L$ the number of bits required to represent any of the rationals in the robust policy evaluation LP used in step one. 
\end{theorem}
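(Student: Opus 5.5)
The plan is to reduce robust policy iteration to standard policy iteration with suboptimality tests (\autoref{def:policy:iteration}) on a suitable auxiliary MDP, and then to inherit its iteration bound. The key observation is that, under $(s,a)$-rectangularity with convex polytopic $\cU_{(s,a)}$, nature's local problem $\inf_{\vec{u} \in \cU_{(s,a)}} \sum_{s'} P_{\vec{u}}(s,a)(s') V(s')$ is a linear program over a polytope. It is therefore attained at one of the finitely many vertices of $\cU_{(s,a)}$. This lets us view the RMDP as a finite turn-based discounted stochastic game: the agent picks $a$, then nature picks a vertex of $\cU_{(s,a)}$. Equivalently, for a fixed agent policy, the evaluation is an MDP in which nature minimises.

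\textbf{Correctness.} First I will show that robust policy iteration is sound. By \autoref{thm:rmdp:policy:eval:in:P}, step one computes $\Vpes^\pi_{\cM}$ exactly. The quantity $\Deltapes^\pi_{sa}$ equals $\fB_a(\Vpes^\pi_{\cM})(s) - \Vpes^\pi_{\cM}(s)$, where $\fB_a$ is the one-action robust Bellman operator. Monotonicity and contraction of $\fB$ (\autoref{prop:robust-bellman}) give the standard policy improvement lemma: if $\pi'$ picks, in every state, an action of maximal $\Deltapes^\pi_{sa}$, then $\Vpes^{\pi'} \ge \fB^{\pi'}(\Vpes^\pi) \ge \Vpes^\pi$. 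This holds because $\Vpes^{\pi'}$ is the fixed point of the monotone contraction $\fB^{\pi'}$, and the inequality is strict when some $\Deltapes^\pi_{sa}>0$. Termination with all $\Deltapes^\pi_{sa}\le 0$ means $\Vpes^\pi$ is a fixed point of $\fB$, hence equals $\Vpes^*_{\cM}$ by uniqueness. This gives optimality within $\Pi^\text{MD}$, and by \autoref{prop:policy:classes} optimality overall.

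\textbf{Iteration bound.} Next I will transfer the $T(|S|(|A|-1))$ bound. The argument for the suboptimality-test variant of \autoref{def:policy:iteration} (from the cited literature) uses only three facts about the Bellman operator of each action. These are: $\gamma$-contraction in $\Linf$, monotonicity, and the identity $V^* - V^\pi \le \frac{1}{1-\gamma}\max_{s}\max_a \Delta^\pi_{sa}$ together with the corresponding lower-bound/pruning inequality. The pruning test in step four is exactly the robust analogue. I will re-derive these inequalities for $\fB$, replacing the expectation by nature's infimum. Each inequality in the classical proof compares one-step lookaheads against a common value vector, and $\inf_{\vec{u}}$ preserves monotonicity and the $\gamma$-Lipschitz property in $\Linf$. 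Hence each inequality goes through verbatim. The counting argument then shows that every $T$ iterations at least one action is permanently eliminated from some $A_s^*$. That argument is combinatorial and only uses these inequalities. Since at most $|S|(|A|-1)$ actions can be eliminated, the bound follows.

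\textbf{Cost per iteration and main obstacle.} The per-iteration cost is $(m+n+L)^{\cO(1)}$ by \autoref{thm:rmdp:policy:eval:in:P} for step one. Step two requires $|S||A|$ small LPs $\min_{\vec{u} \in \cU_{(s,a)}} \vec{u}^\intercal \Vpes^\pi$, each polynomial by the same LP bounds. The bit size of $\Vpes^\pi$ stays polynomial because it is an LP solution. Steps three and four are arithmetic on these values. The main obstacle is the second step: checking that the classical elimination proof truly only uses the abstract operator properties. In particular, the inequality bounding $V^\pi$ from below via the action gaps must hold for the robust operator, where nature's minimiser may differ between $\pi$ and $\pi^*$. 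I would handle this by fixing nature's optimal vertex response against $\Vpes^\pi$. This reduces each needed inequality to a standard MDP inequality in the game view, with nature's choice treated as part of the transition.
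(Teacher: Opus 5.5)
Your proposal is correct and follows essentially the same route as the paper. The paper simply asserts that the $T(|S|(|A|-1))$ bound carries over from classical policy iteration with suboptimality tests, and its appendix justifies the robust pruning test via monotonicity and $\gamma$-contraction of $\fB$, which are exactly the properties you rely on; it likewise bounds each iteration by one robust evaluation LP plus $|S||A|$ local LPs.

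Your sketch is more explicit than the paper about why the transfer is valid, notably in flagging that nature's minimiser differs between $\pi$ and an optimal policy. Two small corrections. First, termination only gives $\Deltapes^\pi_{sa} \le 0$ for $a \in A_s^*$, so concluding that $\Vpes^\pi_{\cM}$ is a fixed point of the full $\fB$ needs the soundness of pruning, i.e.\ that optimal actions are never removed. Second, the classical elimination argument shows that an action is never used again by the policy, not necessarily that it is removed from $A_s^*$.
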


\begin{proof}
    The number of iterations $T(|S|(|A|-1))$ follows directly from the number of iterations of policy iteration for standard MDPs, see \Cref{def:policy:iteration}.
    The complexity of a single iteration is dominated by solving $1+|S||A|$ polynomially-sized LPs, one global policy evaluation LP in step one, and a local LP for each state-action pair in step two.
    The number of variables and inequalities of the robust policy evaluation LP in step one is polynomial in the size of the sum of all local LPs, hence the runtime per iteration can be succinctly expressed as $(m+n+L)^{\cO(1)}$ steps in the input size of this global policy evaluation LP.
\end{proof}

Note that the algorithm runs in polynomial time if $\gamma$ is fixed, just like standard policy iteration for classical MDPs. Critically, this general formulation of robust policy iteration is \emph{not strongly polynomial}, even with $\gamma$ being fixed, because we solve an LP in the first steps (that is not known to be doable in strongly polynomial time).
If one can replace the LP from steps 1 and 2 by a strongly polynomial policy evaluation, like was recently done for RMDPs with bounded $\Linf$ uncertainty sets~\cite{asadi2026stronglypolynomialtimecomplexity}, then the whole algorithm becomes strongly polynomial---again, for a fixed $\gamma$.

The correctness of the optimality test follows from the robust Bellman operator (still) being a contraction mapping, and thus the proof for standard MDPs can be modified, see \Cref{apx:robust:optimality:test} for the full proof.

\section{The Relation with Bisimulation Metrics}
The preceding discussion on (robust) policy iteration is the perfect transition from RMDPs to \emph{bisimulation metrics} to compare MDPs. In this section, we will argue that computing the \emph{Kantorovich distance} between two given distributions with respect to a given \emph{pseudometric} roughly corresponds to the (internal) LPs solved in the first steps of the policy iteration procedure described above. Moreover, the fixed-point definition of bisimulation metrics we focus on mirrors the iterative nature of the robust policy iteration procedure. After some necessary preliminaries to introduce all the notation and formalize the two links just claimed, we will construct an RMDP from two given MDPs so that the robust value of the RMDP corresponds to the bisimulation metric between designated states of the MDPs.

\subsection{Notation for Bisimulation Metrics}

We recap the standard definitions for bisimulation metrics between (states of) MDPs~\cite{DBLP:conf/uai/FernsPP04}.

\begin{definition}[Pseudometric]
    A pseudometric on a set $X$ is a mapping $d \colon X \times X \to [0,\infty)$ such that for all $x,x',x'' \in X$:
    \begin{enumerate*}
        \item $x = x' \implies d(x,x') = 0$;
        \item $d(x,x') = d(x',x)$; and
        \item $d(x,x'') \leq d(x,x') + d(x',x'')$.
    \end{enumerate*}
    The set of all pseudometrics over $X$ is denoted as $\fM_X$.
\end{definition}

We compare two pseudometrics using the product ordering. In symbols, for any two pseudometrics $d_1, d_2 \colon X \times X \to [0,\infty)$, we write $d_1 \leq d_2$ if $d_1(x_1,x_2) \leq d_2(x_1,x_2)$ holds true for all $x_1,x_2 \in X$. More interestingly, we use pseudometrics to quantify the distance between distributions over $X$.

\begin{definition}[Kantorovich distance]\label{def:kantorovich:distance}
    Let $d \in \fM_X$ be a pseudometric over a finite set $X$, with $n = |X|$. The Kantorovich distance between two distributions $\mu,\nu \in \cD(X)$, written $\fD_K(d)(\mu,\nu)$, is the solution to the following LP:
    \[
    \begin{array}{ll}
        \text{minimize} & \sum_{j,k=1}^n \vec{\lambda}(j,k) d(x_j,x_k), \\
        \text{subject to} & 
        \left(\bigwedge_{j=1}^{n}  \sum_{k = 1}^n \vec{\lambda}(j,k) = \mu(x_j) \right) \land \left(
        \bigwedge_{k=1}^{n}  \sum_{j = 1}^n \vec{\lambda}(j,k) = \nu(x_k)\right) \land \vec{\lambda} \geq \vec{0},
    \end{array}
    \]
    where $\{\vec{\lambda}(j,k) \mid 1 \leq j,k \leq n\}$ are $n^2$ optimization variables.
    A feasible assignment ${\vec{\lambda}} \in \bR^{n \times n}$ to this LP is called a \emph{coupling}, and by minimizing the objective function, we find the \emph{minimal coupling} between distributions $\mu$ and $\nu$ w.r.t. the pseudometric $d$.
    We denote the set of couplings between $\mu$ and $\nu$ by $\Lambda_{\mu,\nu}$. %
\end{definition}
Note that the set of couplings $\Lambda_{\mu,\nu}$ is a convex polytope and $\sum_{j,k} \vec{\lambda}(j,k) = 1$, for all $\vec{\lambda} \in \Lambda_{\mu,\nu}$.

We can now define a (family of) bisimulation metrics between the states of an MDP
    $M = (S,A,P,R,\sinit,\gamma)$ using \cite[Theorem 4.5]{DBLP:conf/uai/FernsPP04}.

\begin{definition}[Fixed-point bisimulation metrics]\label{def:fixed:point:bisim:metric}
    Let $0 < c_R,c_T$ and $c_R + c_T \leq 1$ and consider the operator $\fF_K \colon \fM_S \to \fM_S$ on pseudometrics over the states $S$.
        \[
            \fF_K(d)(s_1,s_2) = \max_{a \in A} c_R |R(s_1,a) - R(s_2,a)| + c_T \fD_K(d)(P(s_1,a),P(s_2,a)).
        \]
    Then, $\fF_K$ has a least fixed point $d^*$ which we call the $(c_R,c_T)$-fixed-point bisimulation metric.
\end{definition}
    
    The bisimulation metric between two MDPs $M_1$ and $M_2$ is defined as the distance between states $d({s_{\iota_1}},{s_{\iota_2}})$ in the disjoint union of $M_1$ and $M_2$.
    A standard choice for the weights is $c_T = \gamma, c_R = 1-\gamma$~\cite{DBLP:conf/uai/FernsPP04}. Henceforth, we refer to the $(1-\gamma,\gamma)$-fixed-point bisimulation metric simply as the \emph{bisimulation metric}.

\subsection{From Bisimulation Metric to an \texorpdfstring{$(s,a)$}{(s,a)}-Rectangular RMDP}

We show how computing the bisimulation metric between states of an MDP reduces to computing the robust discounted reward value of an $(s,a)$-rectangular RMDP. 

\begin{definition}[Bisimulation metric RMDP]\label{def:bisim:RMDP}
    Let $M = \tup{S,A,P,R,\sinit,\gamma}$ be an MDP and consider
    states $s_1^*, s_2^* \in S$.
    Construct the following $(s,a)$-rectangular RMDP: $\cM = \tup{S \times S, A, \cU, \cR, \tup{s_1^*,s_2^*}, \gamma}$, where
    $\cU$ and $\cR$ are the uncertainty set and RMDP reward function defined as follows:
    \begin{itemize}
        \item $\cU = \bigtimes_{((s,s'),a) \in S \times S \times A} \cU_{((s,s'),a)} \text{ where }
        \cU_{((s,s'),a)} = \Lambda_{P(s,a),P(s',a)}$; and
        \item $\cR(\tup{s,s'},a) = (1-\gamma)|R(s,a) - R(s',a)|$.
    \end{itemize}
    That is, $\cU$ is an $(s,a)$-rectangular uncertainty set and the local uncertainty set $\cU_{(s,s'),a}$ is the set of couplings between the distributions $P(s,a)$ and $P(s',a)$ in the MDP transition function.
    The reward function is the difference between rewards in the MDP, scaled by $1 - \gamma$. 
\end{definition}

The robust value function of an RMDP is, in general, not a pseudometric. 
For the specific bisimulation metric RMDP constructed above, we first show that its robust value function forms a pseudometric. Then, we prove that the robust value precisely coincides with the bisimulation metric of the original MDP.
\begin{lemma}\label{lemma:robust:bisim:value:is:pseudometric}
The robust value function $\Vpes^*_{\cM} \colon S \times S \to \bR$ of the RMDP $\cM$ constructed in \autoref{def:bisim:RMDP} is a pseudometric over the set of states $S$.
\end{lemma}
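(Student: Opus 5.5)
We prove the three pseudometric axioms for $\Vpes^*_{\cM}$ by induction along the robust value iteration. By \autoref{prop:robust-bellman}, the robust Bellman operator $\fB$ of $\cM$ is a contraction and its unique fixed point is $\Vpes^*_{\cM}$. Hence $\Vpes^*_{\cM} = \lim_{n\to\infty} \fB^n(d_0)$ for any starting function, and we take $d_0 \equiv 0$, which is trivially a pseudometric. The set of functions $S\times S \to \bR$ satisfying the three axioms and nonnegativity is closed under pointwise limits, because each axiom is a non-strict (in)equality between finitely many coordinates. It therefore suffices to show that $\fB$ maps pseudometrics to pseudometrics. Writing out $\fB$ with the uncertainty sets of \autoref{def:bisim:RMDP}, we get
\[
\fB(d)(s,s') = \max_{a\in A} (1-\gamma)|R(s,a)-R(s',a)| + \gamma \min_{\vec{\lambda}\in\Lambda_{P(s,a),P(s',a)}} \sum_{t,t'} \vec{\lambda}(t,t') d(t,t') ,
\]
that is, $\fB(d)(s,s') = \max_a (1-\gamma)|R(s,a)-R(s',a)| + \gamma \fD_K(d)(P(s,a),P(s',a))$. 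This is exactly $\fF_K$ of \autoref{def:fixed:point:bisim:metric} with $(c_R,c_T) = (1-\gamma,\gamma)$. The infimum is attained since the coupling polytope is compact, and nonnegativity is clear because $d\ge 0$ and rewards enter through absolute values.

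For the axioms we fix a pseudometric $d$ and an action $a$, and argue pointwise in $a$; taking the max over $a$ then preserves everything.

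\emph{Zero on the diagonal:} for $s=s'$ the reward term vanishes, and the diagonal coupling $\vec{\lambda}(t,t)=P(s,a)(t)$ gives cost $\sum_t P(s,a)(t)\, d(t,t) = 0$.

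\emph{Symmetry:} the transpose map $\vec{\lambda}\mapsto\vec{\lambda}^\intercal$ is a bijection between $\Lambda_{\mu,\nu}$ and $\Lambda_{\nu,\mu}$, and since $d$ is symmetric it preserves the cost.

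\emph{Triangle inequality:} the reward term satisfies it by the triangle inequality for $|\cdot|$. For the transport term we need $\fD_K(d)(\mu,\rho)\le \fD_K(d)(\mu,\nu)+\fD_K(d)(\nu,\rho)$. The triangle inequality then passes through the max over $a$, because $\max_a(f_a+g_a)\le \max_a f_a + \max_a g_a$.

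The main obstacle is the triangle inequality for the Kantorovich distance, and we would prove it with the standard gluing construction. Let $\vec{\lambda}_1\in\Lambda_{\mu,\nu}$ and $\vec{\lambda}_2\in\Lambda_{\nu,\rho}$ be optimal couplings. Define
\[
\vec{\lambda}(j,l)=\sum_{k:\nu(x_k)>0}\frac{\vec{\lambda}_1(j,k)\,\vec{\lambda}_2(k,l)}{\nu(x_k)} .
\]
First we check its marginals. If $\nu(x_k)=0$ then $\vec{\lambda}_1(j,k)=0$ and $\vec{\lambda}_2(k,l)=0$, so dropping those indices loses no mass and $\vec{\lambda}\in\Lambda_{\mu,\rho}$. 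Next we bound its cost using $d(x_j,x_l)\le d(x_j,x_k)+d(x_k,x_l)$ inside the sum over $k$. Summing out $l$ in the first part and $j$ in the second leaves exactly the costs of $\vec{\lambda}_1$ and $\vec{\lambda}_2$. Hence $\fD_K(d)(\mu,\rho)$ is at most the cost of $\vec{\lambda}$, which is at most $\fD_K(d)(\mu,\nu)+\fD_K(d)(\nu,\rho)$.

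Combining these, $\fB$ preserves pseudometrics, so every iterate $\fB^n(0)$ is a pseudometric, and so is the limit $\Vpes^*_{\cM}$.
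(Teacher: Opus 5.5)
Your proposal is correct and follows essentially the same route as the paper: induction over the robust value-iteration iterates from the zero function, using the diagonal coupling, the symmetry of couplings, and the Kantorovich triangle inequality together with $\max_a(f_a+g_a)\le\max_a f_a+\max_a g_a$. You are more complete than the paper on two points. First, you make explicit the passage to the limit of the iterates via the contraction property and closedness of the axioms. Second, you prove the Kantorovich triangle inequality with the gluing construction, where the paper only cites it.
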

\begin{proof}[Proof sketch]
By induction, it follows that each property holds for every iteration of the robust Bellman operator of $\cM$. The triangle inequality follows from the fact that the Kantorovich distance is a pseudometric.
The full proof can be found in \autoref{apx:proof:robust:bisim:value:is:pseudometric}.
\end{proof}

\begin{theorem}[Correctness of RMDP construction]\label{thm:bisim:reduces:rmdp}
    The optimal robust value $\Vpes^*_{\cM}$ of the RMDP $\cM$ constructed in \autoref{def:bisim:RMDP} precisely coincides with the bisimulation metric $d^*$ between states of the MDP $M$, \ie, $\Vpes^*_{\cM} = d^*$ and in particular $\Vpes_{\cM}^*(s_1^*, s_2^*) = d^*(s_1^*,s_2^*)$.
\end{theorem}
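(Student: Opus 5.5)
The plan is to show that the robust Bellman operator $\fB$ of $\cM$ and the bisimulation operator $\fF_K$ (with $c_R = 1-\gamma$, $c_T = \gamma$) coincide on pseudometrics. Then Banach's fixed-point theorem, via \autoref{prop:robust-bellman}, identifies their fixed points.

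The first step unfolds $\fB$ for an arbitrary $d \colon S \times S \to \bR$. Since $\cU$ is $(s,a)$-rectangular with $\cU_{((s,s'),a)} = \Lambda_{P(s,a),P(s',a)}$, we get
\[
\fB(d)(s,s') = \max_{a \in A} (1-\gamma)|R(s,a)-R(s',a)| + \gamma \min_{\vec{\lambda} \in \Lambda_{P(s,a),P(s',a)}} \sum_{t,t'} \vec{\lambda}(t,t') d(t,t').
\]
The inner minimum is exactly $\fD_K(d)(P(s,a),P(s',a))$ from \Cref{def:kantorovich:distance}; the infimum is attained because $\Lambda$ is a compact polytope. Hence $\fB(d) = \fF_K(d)$ for every pseudometric $d$. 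Note that $\fD_K(d)$ is well defined as an LP for any nonnegative $d$, so in fact $\fB$ and $\fF_K$ agree on all nonnegative functions, and $\fF_K$ extends naturally to $\bR^{S\times S}$.

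Next I relate the fixed points. By \autoref{prop:robust-bellman}, $\Vpes^*_{\cM}$ is the unique fixed point of $\fB$ on $\bR^{S \times S}$. By \autoref{lemma:robust:bisim:value:is:pseudometric}, $\Vpes^*_{\cM} \in \fM_S$, so $\fF_K(\Vpes^*_{\cM}) = \fB(\Vpes^*_{\cM}) = \Vpes^*_{\cM}$, i.e.\ $\Vpes^*_{\cM}$ is a fixed point of $\fF_K$. Conversely, $d^*$ is a fixed point of $\fF_K$ and a pseudometric, so $\fB(d^*) = \fF_K(d^*) = d^*$. Uniqueness of the fixed point of the contraction $\fB$ then forces $d^* = \Vpes^*_{\cM}$. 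Evaluating at $(s_1^*,s_2^*)$ gives the particular claim.

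The main obstacle is the subtlety that $d^*$ is defined as the \emph{least} fixed point of $\fF_K$ on $\fM_S$, while the argument above uses only that $d^*$ is some fixed point lying in $\fM_S$. Both facts are given by \Cref{def:fixed:point:bisim:metric}, so this is fine; indeed, the argument shows that $\fF_K$ has a unique fixed point in $\fM_S$. A second check is that the domain of $\fB$ includes $d^*$: $d^*$ takes values in $[0,\infty)$ and is bounded, being a fixed point of a $\gamma$-contraction with bounded rewards, so it lies in $\bR^{S\times S}$. If one prefers not to rely on \autoref{lemma:robust:bisim:value:is:pseudometric}, the alternative is to iterate $\fB$ from the zero function. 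Zero is a pseudometric, $\fF_K$ preserves $\fM_S$, and $\fM_S$ is closed under pointwise limits. So the iterates, which converge to $\Vpes^*_{\cM}$, stay in $\fM_S$ and coincide with the Kleene iterates of $\fF_K$, giving the same conclusion.
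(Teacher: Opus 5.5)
Your proposal is correct and follows essentially the same route as the paper. You unfold $\fB$ on the product state space, observe that the coupling polytope turns the inner infimum into the Kantorovich LP (so $\fB$ and $\fF_K$ coincide), and then identify $d^*$ with the unique fixed point $\Vpes^*_{\cM}$ of the contraction $\fB$; your two-directional fixed-point argument, which uses \Cref{lemma:robust:bisim:value:is:pseudometric} to place $\Vpes^*_{\cM}$ in $\fM_S$ and notes that any fixed point of $\fF_K$ in $\fM_S$ is a fixed point of $\fB$, spells out what the paper leaves implicit when it says the conclusion ``will thus follow''.
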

\begin{proof}
    Let $M$ be an MDP for which we want to compute the bisimulation metric between two states $\tup{s_1^*, s_2^*}$, and let $\cM$ be the RMDP constructed in \autoref{def:bisim:RMDP}.
    By \autoref{prop:robust-bellman}, the robust Bellman operator $\fB$ has $\Vpes^*_{\cM}$ as unique fixed point.
    We show that $\fB$, when used on nonnegative functions over states of $\cM$, is equivalent to the operator $\fF_K$ on pseudometrics over states of the MDP $M$. It will thus follow that $\fF_K$ is a contraction mapping, it has a unique fixed point $d^*$, and that $d^*$ coincides with $\Vpes^*_{\cM}$ as required.

    To ease readability, we omit the brackets around pairs of states when used in functions.
    The robust Bellman operator for $\cM = \tup{S \times S,  A, \cU, \cR, \tup{s_1^*, s_2^*}, \gamma}$ is defined as follows, where $\Vpes$ is a function $\Vpes \colon S \times S \to \bR$.
    \begin{align*}
    &\fB(\Vpes)(s,s') = \max_{a \in A} \cR(s,s',a) 
     + \gamma \inf_{\vec{u} \in \cU_{(s,s',a)}} \left\{ \sum_{\tup{t,t'} \in S \times S}  P_{\vec{u}}(s,s',a)(t,t') \Vpes(t,t')  \right\}.
     \end{align*}
     We unfold the definition of $\cR$ and rewrite the uncertainty set with the set of couplings:
     \begin{align*}
    & = \max_{a \in A} (1-\gamma)|R(s,a)-R(s',a)| 
     + \gamma \inf_{\vec{u} \in \Lambda_{P(s,a),P(s',a)}} \left\{ \sum_{\tup{t,t'} \in S \times S}  P_{\vec{u}}(s,s',a)(t,t') \Vpes(t,t')  \right\}.
    \end{align*}
    Note that by definition of $\cU$, we have $P_{\vec{u}}(s,s',a)(t,t') = \vec{u}(s,s',a,t,t')$, and thus we obtain
    $P_{\vec{u}}(s,s',a)(t,t') = \vec{\lambda}(t,t')$:
    \begin{align*}
    & = \max_{a \in A} (1-\gamma)|R(s,a)-R(s',a)| + \gamma \inf_{\vec{\lambda} \in \Lambda_{P(s,a),P(s',a)}} \left\{ \sum_{\tup{t,t'} \in S \times S}  \vec{\lambda}(t,t') \Vpes(t,t')  \right\}.
    \end{align*}
    Since the inner optimization problem is a linear program on which the infimum is attained, and minimizers are guaranteed to exist, we may replace $\inf$ by $\min$ and obtain precisely the definition of the Kantorovich distance $\fD_K$ with respect to $\Vpes$:
        \begin{align*}
    & = \max_{a \in A} (1-\gamma)|R(s,a)-R(s',a)| + \gamma \min_{\vec{\lambda} \in \Lambda_{P(s,a),P(s',a)}} \left\{ \sum_{\tup{t,t'} \in S \times S}  \vec{\lambda}(t,t') \Vpes({t,t'})  \right\}\\
    & = \max_{a \in A} (1-\gamma)|R(s,a)-R(s',a)| + \gamma \fD_K(\Vpes)(P(s,a),P(s',a))\\
    & = \fF_K(\Vpes)(s,s').\qedhere
    \end{align*}
\end{proof}

\subsection{Experimental Evaluation}

Our theoretical results enable the use of robust policy iteration as an algorithm to compute the bisimulation metric between MDP states.
We empirically evaluate this method and compare it with the standard fixed point approach~\cite{DBLP:conf/uai/FernsPP04}, implemented through robust bounded value iteration (RBVI) for RMDPs~\cite{DBLP:conf/aaai/MeggendorferWW25}.
RBVI not only bounds the value from below, as in the standard fixed-point approach, but also from above, and terminates when the difference between these two bounds is less than a prespecified $\epsilon$.

Our implementation is written in Python and uses Gurobi as LP solver~\cite{gurobi}.\footnote{The full implementation will be made publicly available upon publication.} 
The experiments were conducted on an Intel Core Ultra 7 268V with 8 cores of $2.2\,$GHz base speed and 32\,GB RAM.
For robust policy iteration, we implement a version with the optimality test enabled (RPIOT) and one without (RPI).
We initialize both variants with a random initial policy, and thus repeat the RPI experiments for $10$ different seeds to account for this source of randomness, and report mean values $\pm$ standard deviation.
RBVI precision is set to $\epsilon = 1e^{-6}$.
We set a timeout of two hours.

\begin{table}[t]
    \centering
    \begin{tabular}{ll|rrlrl}
    \toprule
        \textbf{Map size} & \textbf{Algorithm} & \textbf{Value} & \textbf{Time (s)}& $\pm$SD & \textbf{\#\,Iterations}& [min,max]\\ 
        \midrule
        $5 \times 5$ & RBVI & $1.346629$ & $332.80$& & $143\phantom{.0}$&\\
        $5 \times 5$ & RPI & $1.346629$&  $15.20$&$ \pm 1.09$& $6.7$& $[6,7]$\\
        $5 \times 5$ & RPIOT & $1.346629$ & $12.14$& $\pm 0.69$ & $5.8$& $[5,6]$\\
         \midrule
        $10 \times 10$ & RBVI & $1.434019$ &  $5298.27$&& $143\phantom{.0}$ &\\
        $10 \times 10$ & RPI & $1.434019$ & $391.86$ & $\pm 22.30$& $7.8$& $[7,8]$\\
        $10\times 10$ & RPIOT & $1.434019$ & $313.83$ & $\pm 12.69$& $6.9$& $[6,7]$\\
        \midrule
        $20 \times 20$ & RBVI &  \texttt{Time Out} & -- && --&\\
        $20 \times 20$ & RPI &  \texttt{Mem Out} & -- & &--&\\
        $20 \times 20$ & RPIOT &  \texttt{Mem Out} & -- & &--&\\
         \bottomrule
    \end{tabular}
    \caption{Comparison of robust bounded value iteration with robust policy iteration with and without optimality test on three Frozen Lake instances.}
    \label{tab:exp:frozen:lake:bisim}
\end{table}
\begin{figure}[t]
    \centering
    \hfill
    \begin{subfigure}[t]{0.49\textwidth}
       \centering
        \includegraphics[width=\linewidth]{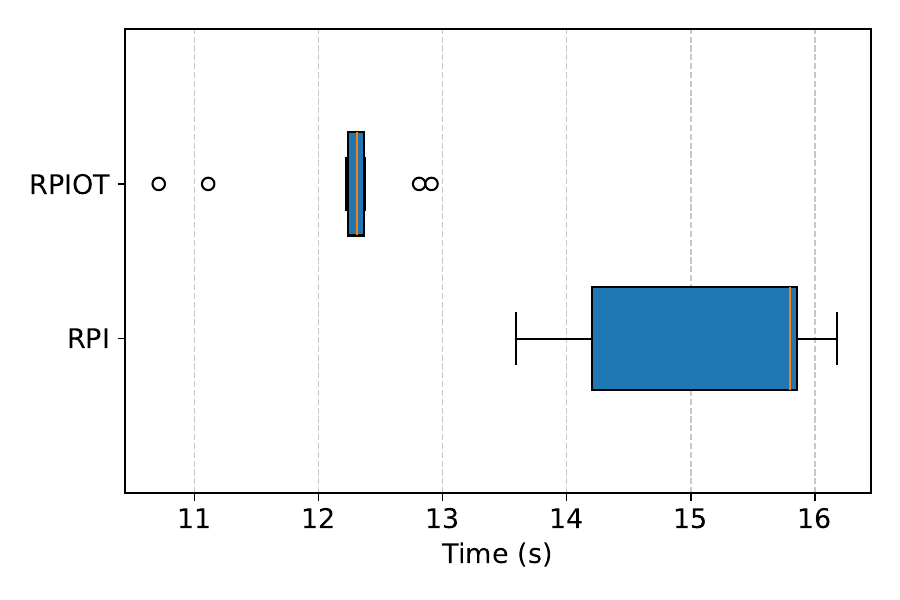}
                \caption{Frozen Lake $5 \times 5$.}
    \end{subfigure}
           \begin{subfigure}[t]{0.49\textwidth}
       \centering
        \includegraphics[width=\linewidth]{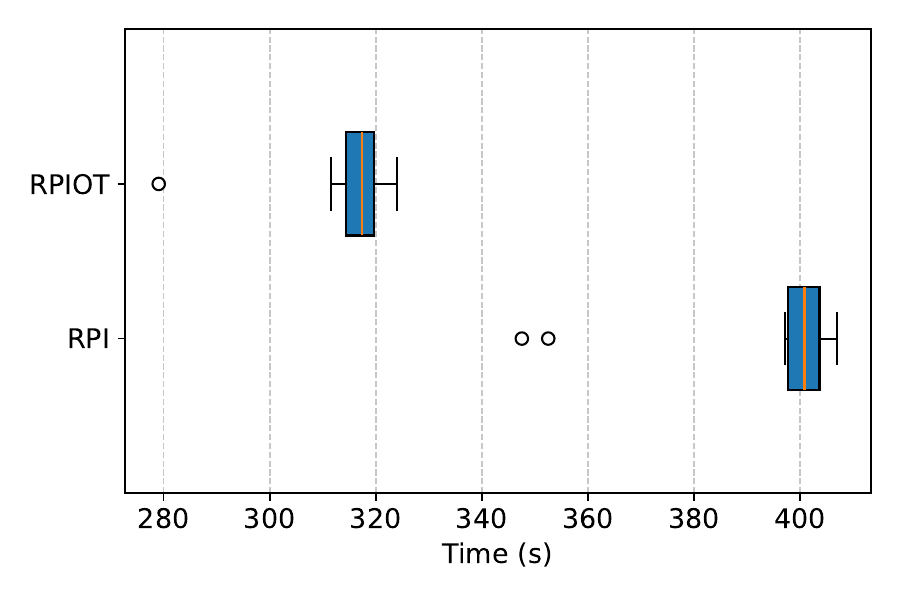}
        \caption{Frozen Lake $10 \times 10$.}
    \end{subfigure}
    \hfill
    \caption{Boxplots of the computation time of RPI and RPIOT across the $10$ different seeds.}
    \label{fig:boxplots}
\end{figure}

As MDP environments, we use three instances of Gymnasium Frozen Lake~\cite{DBLP:journals/corr/abs-2407-17032}, with map sizes of $5\times 5$, $10\times 10$, and $20 \times 20$, resulting in $25$, $100$, and $400$ state MDPs, respectively.
From the RMDP construction, it follows that to compute the bisimulation metric on an MDP with $|S|$ states, the RMDP has $|S|^2$ states; hence, our three Frozen Lake RMDP instances have $625$, $10\,000$, and $160\,000$ states, respectively.
The discount factor is set to $\gamma = 0.9$, and the rewards are $100$ for reaching the goal state, $-1$ for each step, and $-10$ for getting stuck in a hole.

We compute the bisimulation metric between states $0$ (the initial state) and $1$, the cell next to the initial state. Intuitively, the bisimulation metric between these two states indicates how much reward we can potentially gain by switching the initial state from $0$ to $1$.
The results are given in \Cref{tab:exp:frozen:lake:bisim}.
In \Cref{fig:boxplots}, we additionally compare the wall-clock computation time of RPI and RPIOT for all seeds. 
All algorithms achieve the same value to within 6 decimal places.
Robust policy iteration (RPI) performs significantly faster than robust bounded value iteration (RBVI), $22$ times faster on the small Frozen Lake map, and $13$ times faster on the $10 \times 10$ map.
Integrating the optimality test into RPI further increases its efficiency, as seen in the results for RPIOT.
For all seeds, the optimality test caused the algorithm to terminate one iteration earlier, which clearly affected the computation time.

On the $20 \times 20$ map, Gurobi runs out of memory during the first policy evaluation step of RPI, while RBVI reaches the timeout of two hours before convergence.
These last results signify the key difference between RBVI, which solves many small LPs per iteration, and RPI(OT), which solves one large LP per iteration.
Thus, there is a direct trade-off between computational efficiency and memory consumption.

\section{Conclusion}

We have clarified the complexity landscape of the broad class of RMDPs with polytopic uncertainty sets. 
In particular, we established that these $(s,a)$-rectangular RMDPs are $\P$-hard and in $\NP$ in general, and that robust policy iteration is a polynomial time algorithm for these RMDPs when the discount factor is fixed. 
Yet, the exact complexity of $(s,a)$-rectangular RMDPs remains open.
A potential direction for future work would be to prove hardness for some level in the hierarchy of the theory of the reals.
For $s$-rectangular RMDPs, we have established the complexity to be in $\PSPACE$ via the theory of the reals.
Reductions from parity games and bisimulation metrics relate RMDPs to other well-known problems and allow the use of robust policy iteration to compute bisimulation metrics between MDP states, providing a computationally efficient alternative to the standard fixed-point iteration.

In this paper, we considered only the expected discounted reward as the objective.
Our results on the complexity of policy evaluation naturally extend to other objectives, such as reachability, as long as the uncertainty set is \emph{graph-preserving}, \ie, all vectors in the uncertainty set must define the same graph structure, and memoryless policies suffice.
Whether our complexity results can be maintained without graph preservation, or whether it makes the RMDP threshold problem harder, remains open for future work.

\bibliography{references}

@book{klenke2008probability,
  title={Probability theory: a comprehensive course},
  author={Klenke, Achim},
  year={2008},
  publisher={Springer}
}

@book{basu2006algorithms,
  title={Algorithms in real algebraic geometry},
  author={Basu, Saugata and Pollack, Richard and Roy, Marie-Fran{\c{c}}oise},
  year={2006},
  publisher={Springer}
}

@article{DBLP:journals/mst/SchaeferS24,
  author       = {Marcus Schaefer and
                  Daniel Stefankovic},
  title        = {Beyond the Existential Theory of the Reals},
  journal      = {Theory Comput. Syst.},
  volume       = {68},
  number       = {2},
  pages        = {195--226},
  year         = {2024}
}

@article{calude,
  author       = {Cristian S. Calude and
                  Sanjay Jain and
                  Bakhadyr Khoussainov and
                  Wei Li and
                  Frank Stephan},
  title        = {Deciding Parity Games in Quasi-polynomial Time},
  journal      = {{SIAM} J. Comput.},
  volume       = {51},
  number       = {2},
  pages        = {17--152},
  year         = {2022}
}

@article{chk2013,
  author       = {Taolue Chen and
                  Tingting Han and
                  Marta Z. Kwiatkowska},
  title        = {On the complexity of model checking interval-valued discrete time
                  {M}arkov chains},
  journal      = {Inf. Process. Lett.},
  volume       = {113},
  number       = {7},
  pages        = {210--216},
  year         = {2013}
}

@book{DBLP:books/daglib/0020348,
  author       = {Christel Baier and
                  Joost{-}Pieter Katoen},
  title        = {Principles of model checking},
  publisher    = {{MIT} Press},
  year         = {2008}
}

@article{asadi2026stronglypolynomialtimecomplexity,
      title={Strongly Polynomial Time Complexity of Policy Iteration for ${L}_\infty$ Robust {MDP}s}, 
      author={Ali Asadi and Krishnendu Chatterjee and Ehsan Goharshady and Mehrdad Karrabi and Alipasha Montaseri and Carlo Pagano},
      year={2026},
      volume={abs/2601.23229},
      journal={CoRR}
}

@inproceedings{DBLP:conf/uai/FernsP14,
  author       = {Norman Ferns and
                  Doina Precup},
  title        = {Bisimulation Metrics are Optimal Value Functions},
  booktitle    = {{UAI}},
  pages        = {210--219},
  publisher    = {{AUAI} Press},
  year         = {2014}
}

@inproceedings{DBLP:conf/uai/FernsPP04,
  author       = {Norm Ferns and
                  Prakash Panangaden and
                  Doina Precup},
  title        = {Metrics for Finite {M}arkov Decision Processes},
  booktitle    = {{UAI}},
  pages        = {162--169},
  publisher    = {{AUAI} Press},
  year         = {2004}
}

@article{DBLP:journals/mor/Iyengar05,
  author       = {Garud N. Iyengar},
  title        = {Robust Dynamic Programming},
  journal      = {Math. Oper. Res.},
  volume       = {30},
  number       = {2},
  pages        = {257--280},
  year         = {2005}
}

@article{DBLP:journals/mor/WiesemannKR13,
  author       = {Wolfram Wiesemann and
                  Daniel Kuhn and
                  Ber{\c{c}} Rustem},
  title        = {Robust {M}arkov Decision Processes},
  journal      = {Math. Oper. Res.},
  volume       = {38},
  number       = {1},
  pages        = {153--183},
  year         = {2013}
}

@inproceedings{DBLP:conf/fossacs/ChenBW12,
  author       = {Di Chen and
                  Franck van Breugel and
                  James Worrell},
  title        = {On the Complexity of Computing Probabilistic Bisimilarity},
  booktitle    = {FoSSaCS},
  series       = {Lecture Notes in Computer Science},
  volume       = {7213},
  pages        = {437--451},
  publisher    = {Springer},
  year         = {2012}
}

@inproceedings{DBLP:conf/ijcai/ChatterjeeGK0Z24,
  author       = {Krishnendu Chatterjee and
                  Ehsan Kafshdar Goharshady and
                  Mehrdad Karrabi and
                  Petr Novotn{\'{y}} and
                  Dorde Zikelic},
  title        = {Solving Long-run Average Reward Robust {MDP}s via Stochastic Games},
  booktitle    = {{IJCAI}},
  pages        = {6707--6715},
  publisher    = {ijcai.org},
  year         = {2024}
}

@article{DBLP:journals/ior/NilimG05,
  author       = {Arnab Nilim and
                  Laurent El Ghaoui},
  title        = {Robust Control of {M}arkov Decision Processes with Uncertain Transition
                  Matrices},
  journal      = {Oper. Res.},
  volume       = {53},
  number       = {5},
  pages        = {780--798},
  year         = {2005}
}

@inproceedings{DBLP:conf/birthday/SuilenBB0025,
  author       = {Marnix Suilen and
                  Thom S. Badings and
                  Eline M. Bovy and
                  David Parker and
                  Nils Jansen},
  title        = {Robust {M}arkov Decision Processes: {A} Place Where {AI} and Formal
                  Methods Meet},
  booktitle    = {Principles of Verif. {(3)}},
  series       = {Lecture Notes in Computer Science},
  volume       = {15262},
  pages        = {126--154},
  publisher    = {Springer},
  year         = {2024}
}

@book{DBLP:books/wi/Puterman94,
  author       = {Martin L. Puterman},
  title        = {{M}arkov Decision Processes: Discrete Stochastic Dynamic Programming},
  series       = {Wiley Series in Probability and Statistics},
  publisher    = {Wiley},
  year         = {1994}
}

@misc{kallenberg2011markov,
  title={Lecture notes on {M}arkov decision processes},
  author={Kallenberg, Lodewijk},
  url={https://pub.math.leidenuniv.nl/~spieksmafm/colleges/LNMB-MDP/Lecture-notes-Kallenberg.pdf},
  year={2021}, 
  publisher={University of Leiden},
}

@inproceedings{DBLP:conf/stoc/Vaidya87,
  author       = {Pravin M. Vaidya},
  editor       = {Alfred V. Aho},
  title        = {An Algorithm for Linear Programming which Requires {$O(((m+n)n^2
                  + (m+n)^{1.5} n)L)$} Arithmetic Operations},
  booktitle    = {Proceedings of STOC,
                  1987, New York, New York, {USA}},
  pages        = {29--38},
  publisher    = {{ACM}},
  year         = {1987},
  url          = {https://doi.org/10.1145/28395.28399},
  doi          = {10.1145/28395.28399},
  bibsource    = {dblp computer science bibliography, https://dblp.org}
}

@article{DBLP:journals/ipl/Jurdzinski98,
  author       = {Marcin Jurdzinski},
  title        = {Deciding the Winner in Parity Games is in {UP} $\cap$
                  co-{UP}},
  journal      = {Inf. Process. Lett.},
  volume       = {68},
  number       = {3},
  pages        = {119--124},
  year         = {1998},
  url          = {https://doi.org/10.1016/S0020-0190(98)00150-1},
  doi          = {10.1016/S0020-0190(98)00150-1},
  bibsource    = {dblp computer science bibliography, https://dblp.org}
}

@article{DBLP:journals/tcs/ZwickP96,
  author       = {Uri Zwick and
                  Mike Paterson},
  title        = {The Complexity of Mean Payoff Games on Graphs},
  journal      = {Theor. Comput. Sci.},
  volume       = {158},
  number       = {1{\&}2},
  pages        = {343--359},
  year         = {1996},
  url          = {https://doi.org/10.1016/0304-3975(95)00188-3},
  doi          = {10.1016/0304-3975(95)00188-3},
  bibsource    = {dblp computer science bibliography, https://dblp.org}
}

@article{GORISSEN2015124,
title = {A practical guide to robust optimization},
journal = {Omega},
volume = {53},
pages = {124-137},
year = {2015},
issn = {0305-0483},
doi = {https://doi.org/10.1016/j.omega.2014.12.006},
url = {https://www.sciencedirect.com/science/article/pii/S0305048314001698},
author = {Bram L. Gorissen and İhsan Yanıkoğlu and Dick {den Hertog}},
}

@inproceedings{DBLP:conf/nips/SuilenS0022,
  author       = {Marnix Suilen and
                  Thiago D. Sim{\~{a}}o and
                  David Parker and
                  Nils Jansen},
  title        = {Robust Anytime Learning of {M}arkov Decision Processes},
  booktitle    = {NeurIPS},
  year         = {2022}
}

@article{DBLP:journals/jair/BadingsRAPPSJ23,
  author       = {Thom Badings and
                  Licio Romao and
                  Alessandro Abate and
                  David Parker and
                  Hasan A. Poonawala and
                  Mari{\"{e}}lle Stoelinga and
                  Nils Jansen},
  title        = {Robust Control for Dynamical Systems with Non-{G}aussian Noise via Formal
                  Abstractions},
  journal      = {J. Artif. Intell. Res.},
  volume       = {76},
  pages        = {341--391},
  year         = {2023}
}

@article{DBLP:journals/jmlr/JakschOA10,
  author       = {Thomas Jaksch and
                  Ronald Ortner and
                  Peter Auer},
  title        = {Near-optimal Regret Bounds for Reinforcement Learning},
  journal      = {J. Mach. Learn. Res.},
  volume       = {11},
  pages        = {1563--1600},
  year         = {2010}
}

@inproceedings{DBLP:conf/cav/AshokKW19,
  author       = {Pranav Ashok and
                  Jan Kret{\'{\i}}nsk{\'{y}} and
                  Maximilian Weininger},
  title        = {{PAC} Statistical Model Checking for {M}arkov Decision Processes and
                  Stochastic Games},
  booktitle    = {{CAV} {(1)}},
  series       = {Lecture Notes in Computer Science},
  volume       = {11561},
  pages        = {497--519},
  publisher    = {Springer},
  year         = {2019}
}

@article{DBLP:journals/mor/PapadimitriouT87,
  author       = {Christos H. Papadimitriou and
                  John N. Tsitsiklis},
  title        = {The Complexity of {M}arkov Decision Processes},
  journal      = {Math. Oper. Res.},
  volume       = {12},
  number       = {3},
  pages        = {441--450},
  year         = {1987}
}

@article{DBLP:journals/ior/BlandGT81,
  author       = {Robert G. Bland and
                  Donald Goldfarb and
                  Michael J. Todd},
  title        = {The Ellipsoid Method: {A} Survey},
  journal      = {Oper. Res.},
  volume       = {29},
  number       = {6},
  pages        = {1039--1091},
  year         = {1981}
}

@article{khachiyan1980polynomial,
  title={Polynomial algorithms in linear programming},
  author={Khachiyan, Leonid G.},
  journal={USSR Computational Mathematics and Mathematical Physics},
  volume={20},
  number={1},
  pages={53--72},
  year={1980},
  publisher={Elsevier}
}

@article{DBLP:journals/tcs/HaddadM18,
  author       = {Serge Haddad and
                  Benjamin Monmege},
  title        = {Interval iteration algorithm for {MDP}s and {IMDP}s},
  journal      = {Theor. Comput. Sci.},
  volume       = {735},
  pages        = {111--131},
  year         = {2018}
}

@inproceedings{DBLP:conf/cav/QuatmannK18,
  author       = {Tim Quatmann and
                  Joost{-}Pieter Katoen},
  title        = {Sound Value Iteration},
  booktitle    = {{CAV} {(1)}},
  series       = {Lecture Notes in Computer Science},
  volume       = {10981},
  pages        = {643--661},
  publisher    = {Springer},
  year         = {2018}
}

@inproceedings{DBLP:conf/cav/HartmannsK20,
  author       = {Arnd Hartmanns and
                  Benjamin Lucien Kaminski},
  title        = {Optimistic Value Iteration},
  booktitle    = {{CAV} {(2)}},
  series       = {Lecture Notes in Computer Science},
  volume       = {12225},
  pages        = {488--511},
  publisher    = {Springer},
  year         = {2020}
}

@article{hartmanns2026revised,
  title={The revised practitioner’s guide to {MDP} model checking algorithms},
  author={Hartmanns, Arnd and Junges, Sebastian and Quatmann, Tim and Weininger, Maximilian},
  journal={International Journal on Software Tools for Technology Transfer},
  pages={1--31},
  year={2026},
  publisher={Springer}
}

@inproceedings{DBLP:conf/cav/PuggelliLSS13,
  author       = {Alberto Puggelli and
                  Wenchao Li and
                  Alberto L. Sangiovanni{-}Vincentelli and
                  Sanjit A. Seshia},
  title        = {Polynomial-Time Verification of {PCTL} Properties of {MDP}s with Convex
                  Uncertainties},
  booktitle    = {{CAV}},
  series       = {Lecture Notes in Computer Science},
  pages        = {527--542},
  publisher    = {Springer},
  year         = {2013}
}

@book{DBLP:books/lib/SuttonB2018,
  author       = {Richard S. Sutton and
                  Andrew G. Barto},
  title        = {Reinforcement learning - an introduction, 2nd Edition},
  publisher    = {{MIT} Press},
  year         = {2018}
}

@article{DBLP:journals/sttt/BadingsSSJ23,
  author       = {Thom Badings and
                  Thiago D. Sim{\~{a}}o and
                  Marnix Suilen and
                  Nils Jansen},
  title        = {Decision-making under uncertainty: beyond probabilities},
  journal      = {Int. J. Softw. Tools Technol. Transf.},
  volume       = {25},
  number       = {3},
  pages        = {375--391},
  year         = {2023}
}

@inproceedings{DBLP:conf/aaai/MeggendorferWW25,
  author       = {Tobias Meggendorfer and
                  Maximilian Weininger and
                  Patrick Wienh{\"{o}}ft},
  title        = {Solving Robust {M}arkov Decision Processes: Generic, Reliable, Efficient},
  booktitle    = {{AAAI}},
  pages        = {26631--26641},
  publisher    = {{AAAI} Press},
  year         = {2025}
}

@inproceedings{DBLP:conf/concur/Kiefer024,
  author       = {Stefan Kiefer and
                  Qiyi Tang},
  title        = {Minimising the Probabilistic Bisimilarity Distance},
  booktitle    = {{CONCUR}},
  series       = {LIPIcs},
  pages        = {32:1--32:18},
  publisher    = {Schloss Dagstuhl - Leibniz-Zentrum f{\"{u}}r Informatik},
  year         = {2024}
}

@inproceedings{DBLP:conf/fsttcs/ChatterjeeAMR08,
  author       = {Krishnendu Chatterjee and
                  Luca de Alfaro and
                  Rupak Majumdar and
                  Vishwanath Raman},
  title        = {Algorithms for Game Metrics},
  booktitle    = {{FSTTCS}},
  series       = {LIPIcs},
  pages        = {107--118},
  publisher    = {Schloss Dagstuhl - Leibniz-Zentrum f{\"{u}}r Informatik},
  year         = {2008}
}

@inproceedings{DBLP:conf/concur/TangB16,
  author       = {Qiyi Tang and
                  Franck van Breugel},
  title        = {Computing Probabilistic Bisimilarity Distances via Policy Iteration},
  booktitle    = {{CONCUR}},
  series       = {LIPIcs},
  pages        = {22:1--22:15},
  publisher    = {Schloss Dagstuhl - Leibniz-Zentrum f{\"{u}}r Informatik},
  year         = {2016}
}

@article{DBLP:journals/corr/abs-2407-17032,
  author       = {Mark Towers and
                  Ariel Kwiatkowski and
                  Jordan K. Terry and
                  John U. Balis and
                  Gianluca De Cola and
                  Tristan Deleu and
                  Manuel Goul{\~{a}}o and
                  Andreas Kallinteris and
                  Markus Krimmel and
                  Arjun KG and
                  Rodrigo Perez{-}Vicente and
                  Andrea Pierr{\'{e}} and
                  Sander Schulhoff and
                  Jun Jet Tai and
                  Hannah Tan and
                  Omar G. Younis},
  title        = {Gymnasium: {A} Standard Interface for Reinforcement Learning Environments},
  journal      = {CoRR},
  volume       = {abs/2407.17032},
  year         = {2024}
}

@misc{gurobi,
  author = {{Gurobi Optimization, LLC}},
  title = {{Gurobi Optimizer Reference Manual}},
  year = 2026,
  url = "https://www.gurobi.com"
}

@article{DBLP:journals/jcss/StrehlL08,
  author       = {Alexander L. Strehl and
                  Michael L. Littman},
  title        = {An analysis of model-based Interval Estimation for {M}arkov Decision
                  Processes},
  journal      = {J. Comput. Syst. Sci.},
  volume       = {74},
  number       = {8},
  pages        = {1309--1331},
  year         = {2008}
}

@article{DBLP:journals/tmlr/StarreLCO23,
  author       = {Rolf A. N. Starre and
                  Marco Loog and
                  Elena Congeduti and
                  Frans A. Oliehoek},
  title        = {An Analysis of Model-Based Reinforcement Learning From Abstracted
                  Observations},
  journal      = {Trans. Mach. Learn. Res.},
  volume       = {2023},
  year         = {2023}
}

\clearpage
\newpage
\appendix

\section{Special RMDP Instances}\label{apx:standard:RMDPs}
In this appendix, we show how three commonly used instances of RMDPs, namely interval MDPs (IMDPs), and $(s,a)$-rectangular $\Lone$- and $\Linf$-RMDPs, are $(s,a)$-rectangular RMDPs with polytopic uncertainty sets.

\subsection{Interval MDPs}
We recap the standard definition of interval MDPs used across the literature~\cite{DBLP:conf/nips/SuilenS0022,DBLP:conf/birthday/SuilenBB0025} and explain how they are $(s,a)$-rectangular RMDPs with convex polytopic uncertainty sets.

\begin{definition}
    An interval MDP (IMDP) is a tuple $\cI = (S,A,\Plow,\Pup,R,\sinit,\gamma)$ where $S,A,R,\sinit$ and $\gamma$ are as for standard MDPs, and $\Plow, \Pup \colon S \times A \times S \to [0,1]$ are the lower and upper bound functions on the transition probabilities such that
    $0 \leq \Plow(s,a,s') \leq \Pup(s,a,s') \leq 1$.
\end{definition}
An IMDP is an $(s,a)$-rectangular RMDP with uncertainty set
\[
\cU = \bigtimes_{(s,a) \in S \times A} \cU_{(s,a)} = \left\{ \vec{u} \in \bR^{S}_{\geq 0} \midd \sum_{s' \in S} \vec{u}(s') = 1 \wedge \bigwedge_{s'\in S} \vec{u}(s') \in [\Plow(s,a,s'), \Pup(s,a,s')] \right\}.
\]

\begin{lemma}\label{lemma:IMDP:is:RMDP}
Every state-action uncertainty set $\cU_{(s,a)}$ of an IMDP forms a convex polytope.    
\end{lemma}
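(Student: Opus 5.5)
We show that each $\cU_{(s,a)}$ of an IMDP matches the paper's definition of a convex polytope: a closed and bounded set of the form $\{\vec{u} \mid \vec{D}\vec{u} \leq \vec{b}\}$. So the plan is to write down an explicit pair $(\vec{D}_{sa},\vec{b}_{sa})$ whose solution set is exactly $\cU_{(s,a)}$, and then check boundedness. We index vectors by $S$ with $n = |S|$.

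\textbf{Construction.} Every defining condition of $\cU_{(s,a)}$ is linear in $\vec{u}$, so we turn each into inequalities $\leq$:
\begin{itemize}
\item each interval bound $\Plow(s,a,s') \leq \vec{u}(s') \leq \Pup(s,a,s')$ becomes the two rows $-\vec{e}_{s'}^\intercal \vec{u} \leq -\Plow(s,a,s')$ and $\vec{e}_{s'}^\intercal \vec{u} \leq \Pup(s,a,s')$;
\item non-negativity becomes $-\vec{e}_{s'}^\intercal\vec{u} \leq 0$, which is in fact already implied because $\Plow \geq 0$;
\item the equality $\sum_{s'} \vec{u}(s') = 1$ becomes the pair $\vec{1}^\intercal \vec{u} \leq 1$ and $-\vec{1}^\intercal\vec{u} \leq -1$.
\end{itemize}
Stacking these rows gives $\vec{D}_{sa} \in \bR^{(3n+2)\times n}$ and $\vec{b}_{sa}$. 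All entries are rationals already present in the input, so the encoding has polynomial size.

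\textbf{Verification.} Showing that $\{\vec{u} \mid \vec{D}_{sa}\vec{u} \leq \vec{b}_{sa}\} = \cU_{(s,a)}$ is a row-by-row comparison in both directions. The set is closed, since it is a finite intersection of closed halfspaces, and convex, since halfspaces are convex. It is bounded because it lies inside $[0,1]^S$, as every coordinate satisfies $0 \leq \Plow \leq \vec{u}(s') \leq \Pup \leq 1$. Hence it is a convex polytope in the paper's sense.

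The only subtle point is emptiness. If $\sum_{s'}\Plow(s,a,s') > 1$ or $\sum_{s'}\Pup(s,a,s') < 1$, then $\cU_{(s,a)}$ is empty. This is still a (degenerate) polytope given by $(\vec{D}_{sa},\vec{b}_{sa})$, but it is not a valid uncertainty set. We would either note that well-formed IMDPs assume $\sum_{s'}\Plow(s,a,s') \leq 1 \leq \sum_{s'}\Pup(s,a,s')$, which guarantees non-emptiness, or observe that this condition can be checked in polynomial time.
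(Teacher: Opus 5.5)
Your proposal is correct and follows essentially the same route as the paper: the paper also notes that every defining constraint is linear and writes down an explicit stacked pair $(\vec{D}_{sa},\vec{b}_{sa})$, with rows $\pm\vec{1}_k^\intercal$ encoding the sum-to-one constraint and $-\vec{I}_k$, $\vec{I}_k$ encoding the lower and upper interval bounds. Your explicit checks of closedness, of boundedness via containment in $[0,1]^S$, and your remark on possible emptiness when the interval bounds are inconsistent are small additions that the paper leaves implicit.
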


\begin{proof}
    Since all constraints in $\cU_{(s,a)}$ are linear, it follows that the set forms a convex polytope. 
    We now explicitly construct the matrix $\vec{D}
    _{sa}$ and vector $\vec{b}_{sa}$ such that $\cU_{s,a} = (\vec{D}_{sa},\vec{b}_{sa})$.
    Assume an ordering over the $k = |S|$ successor states and define
\[
\vec{D}_{sa} = 
\begin{pmatrix}
    {-}\vec{1}_k^\intercal \\
    {-}\vec{1}_k^\intercal \\
    \vec{1}_k^\intercal \\
    {-}\vec{I}_k \\
    \vec{I}_k
\end{pmatrix},
\qquad \vec{b}_{sa} = 
\begin{pmatrix}
    0 \\
    -1 \\
    1 \\
    -\Plow(s,a,s_1)\\
    \vdots\\
    -\Plow(s,a,s_k) \\
    \Pup(s,a,s_1) \\
    \vdots \\
    \Pup(s,a,s_k)
\end{pmatrix},
\]
where $\vec{1}_k$ is the one vector of size $k$, and $\vec{I}_k$ is the $k \times k$ identity matrix.
Clearly, constructing $\vec{D}_{sa}$ and $\vec{b}_{sa}$ from an input IMDP is polynomial.
\end{proof}

\subsection{\texorpdfstring{$\Linf$}{L-infinity}-RMDPs}

Next, we recap the standard definition for $\Linf$-RMDPs, and prove how every $\Linf$-RMDP can be transformed into an IMDP.
Recall that the $\Linf$ distance between two vectors $\vec{x}, \vec{y} \in \bR^{n}$ is defined as $\norm{\vec{x} - \vec{y}}{\infty} = \max_i \{|\vec{x}(i) - \vec{y}(i)| \}$.

\begin{definition}[$\Linf$-RMDP]
    An $\Linf$-RMDP is a tuple ${\L}_{\infty} = (M,\epsilon)$ where $M = (S,A,P,R,\sinit,\gamma)$ is an MDP, and $\epsilon \colon S \times A \to \bR_{\geq 0}$ is a state-action distance measure.
    The $\Linf$-RMDP $\cL_\infty$ is an $(s,a)$-rectangular RMDP with state-action uncertainty sets defined by $\cU_{(s,a)} = \{\vec{u} \in \bR^{S}_{\geq 0} \mid \sum_{s' \in S} \vec{u}(s') = 1 \wedge \norm{\vec{p}_{sa} - \vec{u}}{\infty} \leq \epsilon(s,a) \}$, \ie, the set of probability vectors within $\Linf$ distance $\epsilon(s,a)$ to the transition probability $P(s,a)$ of $M$.
\end{definition}

\begin{lemma}
$\Linf$-RMDPs reduce to IMDPs. That is, for every $\Linf$-RMDP, there is a polynomially sized IMDP with precisely the same uncertainty set.
\end{lemma}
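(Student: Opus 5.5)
The plan is to exhibit the IMDP directly, by unfolding the $\Linf$ ball constraint coordinate by coordinate. Fix an $\Linf$-RMDP $(M,\epsilon)$ with $M = (S,A,P,R,\sinit,\gamma)$. We build the IMDP $\cI = (S,A,\Plow,\Pup,R,\sinit,\gamma)$ that has the same states, actions, rewards, initial state and discount factor, and whose bounds are
\[
\Plow(s,a,s') = \max\{0,\, P(s,a)(s') - \epsilon(s,a)\}, \qquad \Pup(s,a,s') = \min\{1,\, P(s,a)(s') + \epsilon(s,a)\}.
\]
These are well formed: since $P(s,a)(s') \in [0,1]$ and $\epsilon(s,a) \geq 0$, we get $0 \leq \Plow(s,a,s') \leq P(s,a)(s') \leq \Pup(s,a,s') \leq 1$.

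Next we check that the uncertainty sets agree for every pair $(s,a)$. Take a vector $\vec{u} \in \bR^S_{\geq 0}$ with $\sum_{s'} \vec{u}(s') = 1$. By the definition of the $\Linf$ norm, $\norm{\vec{p}_{sa} - \vec{u}}{\infty} \leq \epsilon(s,a)$ holds exactly when $|P(s,a)(s') - \vec{u}(s')| \leq \epsilon(s,a)$ for every $s'$. That in turn means $\vec{u}(s') \in [P(s,a)(s') - \epsilon(s,a),\, P(s,a)(s') + \epsilon(s,a)]$. The vector $\vec{u}$ is a probability vector, so $0 \leq \vec{u}(s') \leq 1$ holds automatically. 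Intersecting with $[0,1]$ therefore changes nothing, and the condition is equivalent to $\vec{u}(s') \in [\Plow(s,a,s'), \Pup(s,a,s')]$. Both sets carry the same nonnegativity and sum-to-one constraints, so $\cU_{(s,a)}$ of the $\Linf$-RMDP equals $\cU_{(s,a)}$ of $\cI$. Both RMDPs are $(s,a)$-rectangular, so taking the product over all $(s,a)$ shows that the global uncertainty sets coincide.

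Finally we bound the size. The IMDP has $|S|^2|A|$ lower bounds and as many upper bounds. Each is obtained from rationals in the input by one addition or subtraction and one comparison with $0$ or $1$. The bit size of each entry therefore grows by at most a constant factor plus an additive term in the sizes of $P(s,a)(s')$ and $\epsilon(s,a)$, and the whole construction is polynomial.

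There is no real obstacle here. The only point needing care is the clipping to $[0,1]$. We clip so that the IMDP definition is satisfied, because it demands bounds in $[0,1]$. The clipping is harmless because probability vectors already lie in $[0,1]$. Combined with \autoref{lemma:IMDP:is:RMDP}, this also shows that $\Linf$-RMDPs are $(s,a)$-rectangular RMDPs with polytopic uncertainty of polynomial size.
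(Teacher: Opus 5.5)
Your proof is correct and follows essentially the same route as the paper. You use the same clipped bounds $\Plow(s,a,s') = \max\{P(s,a)(s') - \epsilon(s,a), 0\}$ and $\Pup(s,a,s') = \min\{P(s,a)(s') + \epsilon(s,a), 1\}$, and the same coordinatewise unfolding of the $\Linf$ constraint to show the uncertainty sets coincide. Your single if-and-only-if chain is cleaner than the paper's two-inclusion argument, and your explicit bit-size bound supplies the polynomiality claim that the paper leaves implicit.
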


\begin{proof}
    Given an $\Linf$-RMDP $\cLinf = (M,\epsilon)$ around MDP $M = (S,A,P,R,\sinit,\gamma)$, we construct the IMDP $\cI = (S,A,\Plow,\Pup,R,\sinit,\gamma)$.
    The interval on each transition $[\Plow(s,a,s'),\Pup(s,a,s')]$ is defined by
    \[
    \Plow(s,a,s') = \max\{P(s,a)(s') - \epsilon(s,a),0\}, \quad \Pup(s,a,s') = \min\{P(s,a)(s') + \epsilon(s,a), 1\}.
    \]

    Note that this is a valid IMDP satisfying $0 \leq \Plow(s,a,s') \leq \Pup(s,a,s') \leq 1$ for every transition, and the uncertainty sets at each state-action pair are
    \[
        \cU^{\cI}_{(s,a)} = \left\{ \vec{u} \mid \sum_{s' \in S} \vec{u}(s') = 1 \wedge \vec{u}(s') \in [\Plow(s,a,s'), \Pup(s,a,s')]  \right\}.
    \]
    By the definition of the intervals, we have
    \[
        \cU^{\cI}_{(s,a)} = \left\{ \vec{u} \midd \sum_{s' \in S} \vec{u}(s') = 1 \wedge \vec{u}(s') \in [P(s,a)(s') - \epsilon(s,a), P(s,a)(s') + \epsilon(s,a)] \wedge \vec{u}(s') \in [0,1]  \right\}.
    \]

    We now show $\cU^{\cLinf}_{(s,a)} = \cU^{\cI}_{(s,a)}$.
    Let $\vec{u} \in \cU^{\cLinf}_{(s,a)}$, and as before $\vec{p}_{sa} = P(s,a)$.
    Then $\vec{u}$ by definition satisfies the following constraints
    \begin{align*}
    &\bigwedge_{s' \in S} \vec{u}(s') \in [0,1] \wedge \sum_{s'} \vec{u}(s') = 1 \wedge \norm{\vec{p}_{sa} - \vec{u}}{\infty} \leq \epsilon(s,a).
    \end{align*}
    The first two constraints also appear in the definition of $\cU^{\cI}_{(s,a)}$. 
    We rewrite the last constraint.
    \[
    \norm{\vec{p}_{sa} - \vec{u}}{\infty} \leq \epsilon(s,a) \implies 
    \max_{s'} \big\{|\vec{p}_{sa}(s') - \vec{u}(s')|\big\} \leq \epsilon(s,a).
    \]
    Assume the maximum is attained at some $s^*$, then for all $s' \neq s^*$, we have
    \begin{equation}\label{eq:maximum:linf:element}
    |\vec{p}_{sa}(s') - \vec{u}(s')| \leq |\vec{p}_{sa}(s^*) - \vec{u}(s^*)| \leq \epsilon(s,a).
    \end{equation}
    Additionally, for all $s' \in S$, including $s^*$, we have for arbitrary $c \in \bR_{\geq 0}$ that
    \begin{align*}
     |\vec{p}_{sa}(s') - \vec{u}(s')| \leq c
     &\implies -c \leq \vec{p}_{sa}(s') - \vec{u}(s') \leq c\\ &\implies -c - \vec{p}_{sa}(s') \leq - \vec{u}(s') \leq c - \vec{p}_{sa}(s')\\
     &\implies  c + \vec{p}_{sa}(s') \geq \vec{u}(s') \geq \vec{p}_{sa}(s') - c \\
     &\implies \vec{u}(s') \in [\vec{p}_{sa}(s') -c, \vec{p}_{sa}(s') + c]
    \end{align*}
    From  \Cref{eq:maximum:linf:element} with $c = \epsilon(s,a)$, it then follows that for all $s' \in S$, we have $\vec{u}(s') \in [\vec{p}_{sa}(s') - \epsilon(s,a), \vec{p}_{sa}(s') + \epsilon(s,a)]$. 
    Hence $\forall s \in S, a\in A \colon \cU^{\cLinf}_{(s,a)} \subseteq \cU^{\cI}_{(s,a)}$.

    The other direction follows similarly.
    Assume $\vec{u} \in \cU^{\cI}_{(s,a)}$.
    By definition, $\vec{u}$ satisfies $\forall s' \in S \colon \vec{u}(s') \in [0,1] \wedge \sum_{s'} \vec{u} = 1$, hence we only need to show it satisfies the infinity norm constraint.
    \begin{align*}
        \vec{u} \in \cU^{\cI}_{(s,a)} 
        &\implies \forall s' \in S \colon \vec{u}(s') \in [\vec{p}_{sa}(s') - \epsilon(s,a), \vec{p}_{sa}(s') + \epsilon(s,a)]\\
        &\implies \forall s' \in S \colon |\vec{p}_{sa}(s') - \vec{u}(s')| \leq \epsilon(s,a) \\
        &\implies \max_{s'} \{|\vec{p}_{sa}(s') - \vec{u}(s')|\} \leq \epsilon(s,a).
    \end{align*}
    Thus, $\norm{\vec{p}_{sa} - \vec{u}}{\infty} \leq \epsilon(s,a)$, $\vec{u} \in \cU^{\cLinf}_{(s,a)}$, and $\cU^{\cLinf}_{(s,a)} \supseteq \cU^{\cI}_{(s,a)}$.
    We conclude that both uncertainty sets contain the same probability distributions, and thus that every $\Linf$-RMDP can be reduced to an IMDP.
\end{proof}
From \Cref{lemma:IMDP:is:RMDP} follows that every $\Linf$-RMDP is an $(s,a)$-rectangular RMDP with convex polytopic uncertainty sets.

\subsection{\texorpdfstring{$\Lone$}{L-one}-RMDPs}

\begin{definition}[$\Lone$-RMDP]
Let $\cM = (M,\epsilon)$ where $M = (S,A,P,R,\sinit,\gamma)$ is an MDP, and $\epsilon \colon S \times A \to \bR_{\geq 0}$ is a state-action distance measure.
The $\Lone$-RMDP $\cM$ is an $(s,a)$-rectangular RMDP with state-action uncertainty sets defined by 
\[
\cU_{(s,a)} = \left\{\vec{u} \in \bR_{\geq 0}^{S} \midd \sum_{s' \in S} \vec{u}(s') = 1 \wedge \norm{\vec{p}_{sa} - \vec{u}}{1} \leq \epsilon(s,a)\right\}.
\] 
That is, it is the set of probability vectors within $\Lone$ distance $\epsilon(s,a)$ to the transition probability $P(s,a)$ of $M$.
\end{definition}

The uncertainty sets $\cU_{(s,a)}$ can be described by two types of convex polytopes: one with an exponential number of constraints, and one with $|S|$ additional variables, but a linear number of constraints.
We give both formulations below.

First, the polytope that remains $|S|$-dimensional but incurs an exponential blow-up in the number of constraints.
The polytope is given by
\[
\cU_{(s,a)} = \left\{ \vec{u} \in \bR_{\geq 0}^{S} \midd 
\sum_{s' \in S} \vec{u}(s') = 1 \wedge  \bigwedge_{\vec{t} \in \{-1,1\}^{S}} \vec{t}^\intercal (\vec{u} - \vec{p}_{sa})  \leq \epsilon(s,a) \right\}.
\]
Note the exponential number of constraints, as we add a constraint for each possible vector $\vec{t} \in \{-1,1\}^S$, \ie, $2^{|S|}$ constraints.
It can be represented by the following matrix $\vec{D}_{sa}$ and vector $\vec{b}_{sa}$ such that $\cU_{(s,a)} = (\vec{D}_{sa},\vec{b}_{sa})$:
\[
\vec{D}_{sa} = 
\begin{pmatrix}
    {-}\vec{I}_k \\
    {-}\vec{1}_k^\intercal \\
    \vec{1}_k^{\intercal} \\
    \vec{T}
\end{pmatrix}, \quad \vec{b}_{sa} = \begin{pmatrix}
    \vec{0}_k \\
    {-}1\\
    1\\ 
    \epsilon(s,a)\vec{1}_{2^k} + \vec{T}\vec{p}_{sa}
\end{pmatrix},
\]
where $k = |S|$, $\vec{T} \in \{-1,1\}^{2^{|S|} \times |S|}$ is the matrix of all possible sign vectors, one in each row.

Alternatively, $\cU_{s,a}$ can be lifted to a higher-dimensional polytope $\cU_{s,a}^+$, \ie, one with more variables but a linear number of constraints in the input size.
\[
\cU_{s,a}^{+} = \left\{ (\vec{u},\vec{t}) \in \bR_{\geq 0}^{2|S|} \midd 
\begin{array}{ll}
\sum_{s' \in S} \vec{u}(s') = 1
\wedge \sum_{s' \in S} \vec{t}(s') \leq \epsilon(s,a) \\
 \wedge\bigwedge_{s' \in S} \vec{u}(s') - \vec{t}(s') \leq \vec{p}(s')
 \wedge\bigwedge_{s' \in S} -\vec{u}(s') - \vec{t}(s') \leq -\vec{p}(s') 
\end{array}
\right\}.
\]
The lifted polytope is represented by the following matrix and vector.
\[
\vec{D}_{sa}^+ = 
\begin{pmatrix}
    {-}\vec{I}_k & \vec{0}_k \\
    \vec{0}_k & {-}\vec{I}_k \\
    \vec{1}_k^\intercal & \vec{0}_k \\
    {-}\vec{1}_k^\intercal & \vec{0}_k \\
    \vec{0}_k & \vec{1}_k^\intercal \\
    \vec{I}_k & -\vec{I}_k \\
    {-}\vec{I}_k & {-}\vec{I}_k
\end{pmatrix}, \qquad \vec{b}_{sa}^+ = \begin{pmatrix}
    \vec{0}_k \\
    \vec{0}_k\\
    1\\
    -1\\
    \epsilon(s,a)\\
    \vec{p} \\
    {-}\vec{p}
\end{pmatrix},
\]
where again the subscript $k = |S|$ denotes the respective dimensions. 
We can rewrite the polytope $(\vec{D}_{sa}^+, \vec{b}_{sa}^+)$ as follows.
We split the matrix $\vec{D}_{sa}^+$ into two submatrices $\vec{Y}_{sa}$, $\vec{Z}_{sa}$ such that
\[
\vec{D}_{sa}^+ \begin{pmatrix}
    \vec{u} \\
    \vec{t}
\end{pmatrix} \leq \vec{b}_{sa}^+  \iff 
\vec{Y}_{sa} \vec{u} + \vec{Z}_{sa} \vec{t} \leq \vec{b}_{sa}^+,
\]
where $\vec{Y}_{sa}$ is the first block of columns and $\vec{Z}_{sa}$ the second block of columns of $\vec{D}_{sa}^+$, \ie,
\[
\vec{Y}_{sa} = \begin{pmatrix}
    {-}\vec{I}_k \\
    \vec{0}_k \\
    \vec{1}_k^\intercal \\
    {-}\vec{1}_k^\intercal \\
    \vec{0}_k\\
    \vec{I}_k \\
    {-}\vec{I}_k
\end{pmatrix}, \quad \vec{Z}_{sa} = \begin{pmatrix}
    \vec{0}_k \\
    -\vec{I}_k \\
    \vec{0}_k\\
    \vec{0}_k \\
    \vec{1}_k^\intercal \\
    {-}\vec{I}_k \\
    {-}\vec{I}_k
\end{pmatrix}.
\]
The problem then dualizes into
    \begin{align}
        \min_{\vec{v}} \,&\, {-}\vec{c}^\intercal \vec{v} \notag\\
        \text{Subject to: } \, & \bigwedge_{ s \in S} \begin{cases}
            \vec{e}_s^\intercal \vec{v} + (\vec{b}_s^\pi)^\intercal \vec{w}_s \leq R^\pi(s)  \\
            (\vec{Y}_{s}^\pi)^\intercal \vec{w}_s = \gamma \vec{v} \\
            (\vec{Z}_s^\pi)^\intercal \vec{w}_s = 0\\
            \vec{w}_s \geq 0 
        \end{cases} \label{eq:dtmc:robust:l1:lp:dual}
    \end{align}
This LP follows the same structure as the standard robust LP for a convex polytope detailed in \Cref{eq:dtmc:robust:lp:classical-dual}, except that we add an additional constraint on the second part of the lifted polytope such that the dual projects onto the first part.

\subsection{Kantorovich RMDP}
We close with the RMDPs defined in \Cref{def:bisim:RMDP} that capture bisimulation metrics between two MDP states.
The uncertainty sets are given by the sets of couplings between distributions at each state-state-action tuple:
\[
\cU_{(s,t,a)} = \Lambda_{P(s,a),P(t,a)},
\]
which is a convex polytope.
We now construct the matrix $\vec{D}_{sta}$ and vector $\vec{b}_{sta}$ that represent $\cU_{(s,t,a)}$.
Let $\vec{C}_\mathit{row} = \vec{I}_k \otimes \vec{1}_k^\intercal$ and $\vec{C}_{\mathit{col}} = \vec{1}_k^\intercal \otimes \vec{I}_k$, where $\otimes$ is the Kronecker product between two matrices, $k = |S|$ again, and let $\vec{p}_{sa} = P(s,a)$ and $\vec{p}_{ta} = P(t,a)$ be the probability vectors at state-action pairs $(s,a)$ and $(t,a)$, respectively.
The polytope is then defined by
\[
\vec{D}_{sta} = \begin{pmatrix}
    {-}\vec{I}_{k^2} \\
    {-}\vec{C}_\mathit{row} \\
    \vec{C}_\mathit{row} \\
    {-}\vec{C}_\mathit{col} \\
    \vec{C}_\mathit{col}
\end{pmatrix}, \quad \vec{b}_{sta} = 
\begin{pmatrix}
\vec{0}_{k^2} \\
    {-}\vec{p}_{sa}\\
\vec{p}_{sa}\\
{-}\vec{p}_{ta}\\
\vec{p}_{ta} 
\end{pmatrix}.
\]

\section{Parity-game Hardness}\label{apx:games}
Let $\pi$ be a memoryless policy. We define the \emph{antagonistic value} of the policy by (adversarially) choosing successors in the RMDP. For a function $\tau \colon S \times A \to S$, we say it respects the possibilities in the uncertainty set $\mathcal{U}$, and write $\tau \models \mathcal{U}$ if:
\[
P_{\vec{u}}(s,a,\tau(s,a)) > 0, \text{ for all } s \in S,  a \in \supp(\pi(s)),
\]
for some $\vec{u} \in \mathcal{U}$. 
If, additionally, 
\[
P_{\vec{u}}(s,a,\tau(s,a)) = 1, \text{ for all } s \in S,  a \in \supp(\pi(s)),
\]
for some $\vec{u} \in \cU$, $\tau$ is a vertex of $\cU$ and we write $\tau \in \cU$. 
Now, define:
\[
\mathrm{aVal}^{\pi}_{\cM} = \inf_{\tau \models \mathcal{U}}  \bE_{\pi} \left[
\sum_{t=0}^\infty \gamma^t R(s_t,a_t),
\right],
\]
where 
$s_{t+1} = \tau(s_t,a_t)$ (and thus the only remaining stochasticity, if any, comes from $\pi$).

\begin{lemma}\label{lem:tau-vtx}
    Let $\cU$ be such that, for all $\tau: S \times A \to S$, it contains the $\tau$-vertex. Then, for all policies $\pi$, we have that
    \( \mathrm{aVal}^\pi_{\mathcal{M}} \leq \Vpes_{\mathcal{M}}^\pi.
    \) If, moreover, $\cU$ is the convex hull of all $\tau$-vertices, then we have equality.
\end{lemma}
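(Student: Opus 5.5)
The plan is to fix a memoryless policy $\pi$, as in the definition of $\mathrm{aVal}$, and compare both quantities through individual instantiations $\vec{u} \in \cU$. Fixing $\vec{u}$ together with $\pi$ turns $\cM$ into a Markov chain with value $V^\pi_{M(\vec{u})}$. The robust value is $\Vpes^\pi_{\cM} = \inf_{\vec{u} \in \cU} V^\pi_{M(\vec{u})}$. On the other side, $\mathrm{aVal}^\pi_{\cM}$ is an infimum over deterministic successor selections $\tau$. Each inequality will be obtained by comparing one side termwise with the other.

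\emph{Step 1 ($\mathrm{aVal}^\pi_{\cM} \leq \Vpes^\pi_{\cM}$).} Fix an arbitrary $\vec{u} \in \cU$. I will view the chain induced by $\pi$ and $\vec{u}$ as one particular randomized, memoryless strategy of an adversary in an auxiliary (one-player, minimizing) MDP $N_{\vec{u}}$. In $N_{\vec{u}}$, at state $s$ after the action $a$ drawn from $\pi(s)$, the adversary chooses a successor among $\supp(P_{\vec{u}}(s,a))$; concretely, its states are pairs $(s,a)$ with $a \in \supp(\pi(s))$, and the rewards are those of $\cM$. By the classical fact that memoryless deterministic strategies are optimal for discounted MDPs (here for minimization), there is a deterministic selection $\tau$ with $\tau(s,a) \in \supp(P_{\vec{u}}(s,a))$ whose value is at most $V^\pi_{M(\vec{u})}$. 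This $\tau$ satisfies $\tau \models \cU$, with $\vec{u}$ itself as the single witness for all $(s,a)$ simultaneously. Hence $\mathrm{aVal}^\pi_{\cM} \leq V^\pi_{M(\vec{u})}$, and taking the infimum over $\vec{u}$ gives the first claim. This direction does not even seem to need the $\tau$-vertex hypothesis.

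\emph{Step 2 (the reverse inequality).} Let $\tau \models \cU$ be arbitrary. By the hypothesis, the $\tau$-vertex $\vec{u}_\tau$ belongs to $\cU$. Under $\vec{u}_\tau$ the successor of $(s,a)$ is $\tau(s,a)$ with probability one for every $a \in \supp(\pi(s))$. The chain induced by $\pi$ and $\vec{u}_\tau$ therefore has exactly the expected discounted reward appearing inside the definition of $\mathrm{aVal}$ for $\tau$, so $\Vpes^\pi_{\cM} \leq V^\pi_{M(\vec{u}_\tau)}$ equals the $\tau$-term. Infimizing over $\tau$ gives $\Vpes^\pi_{\cM} \leq \mathrm{aVal}^\pi_{\cM}$.

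Under the convex-hull assumption I would additionally check that nothing is lost by restricting nature to vertices. This should follow from the argument in the proof of \autoref{thm:rmdp:policy:eval:in:P}: the infimum over a polytope of the relevant (coordinatewise linear) Bellman expressions is attained at vertices, which here are exactly the $\tau$-vertices.

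The main obstacle I expect is Step 1's appeal to deterministic adversary strategies. I must make sure the auxiliary MDP $N_{\vec{u}}$ is set up so that its memoryless deterministic strategies are exactly functions $\tau \colon S \times A \to S$ restricted to $a \in \supp(\pi(s))$. The randomness of $\pi$ must remain part of the environment rather than of the adversary's choice, which is why its states include the drawn action. For history-dependent $\pi$ this breaks, because the optimal selection could depend on history while $\tau$ is memoryless; I will therefore restrict the statement to memoryless $\pi$, as in the definition of $\mathrm{aVal}$.
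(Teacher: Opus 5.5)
Your proof is correct, but it is organized differently from the paper's two-sentence argument.

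\textbf{What the paper does.} It attributes the first claim to ``containment'' of the $\tau$-vertices in $\cU$. It then gets equality from the fact that the robust value of a fixed policy is attained at a vertex, and all vertices of the hull are $\tau$-vertices. Observe that containment is exactly your Step~2, and it yields $\Vpes^\pi_{\cM} \leq \mathrm{aVal}^\pi_{\cM}$. That is the reverse of the displayed inequality. In the paper, the stated direction $\mathrm{aVal}^\pi_{\cM} \leq \Vpes^\pi_{\cM}$ really comes only from the vertex-attainment sentence, since each vertex $\vec{u}_\tau$ has $\tau \models \cU$. That sentence uses the convex-hull hypothesis, and tacitly the product structure under which vertex attainment holds.

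\textbf{What your route buys.} Your Step~1 instead derandomizes nature via memoryless-deterministic optimality in an auxiliary minimizing MDP. This proves the stated inequality for every memoryless $\pi$ with no assumption on $\cU$ at all. Together with Step~2, you already obtain equality under the first hypothesis alone, so the convex-hull clause follows a fortiori. Your closing paragraph is therefore superfluous. As phrased, its appeal to vertex attainment would also need rectangularity, which the lemma does not assume.

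\textbf{Detail to nail down.} The one point to fix is the construction of the auxiliary MDP. Alternating states $s$ and $(s,a)$ with a uniform discount $\gamma$ per move would distort the discounting. It is cleanest to let nature's action at $s$ be a selection $f \colon \supp(\pi(s)) \to S$ with $f(a) \in \supp(P_{\vec{u}}(s,a))$. Then one auxiliary step is one step of $\cM$, and the reward at $s$ is $\sum_a \pi(s)(a) R(s,a)$. Represent $\vec{u}$ by choosing $f$ with probability $\prod_a P_{\vec{u}}(s,a)(f(a))$. This reproduces the kernel $\sum_a \pi(s)(a) P_{\vec{u}}(s,a)$, so this randomized strategy has value $V^\pi_{M(\vec{u})}$. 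The resulting deterministic selections are exactly the $\tau$ witnessed by $\vec{u}$.
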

\begin{proof}
The first part of the statement follows directly from containment of the $\vec{u}$ corresponding to each $\tau$ in $\cU$. For the second part, we note that the $\tau$-vertices correspond to the vertices of the convex hull. Hence, to conclude, recall that the robust discounted reward value of a given policy is achieved at a vertex of the uncertainty polytope.
\end{proof}

A \emph{discounted-sum game}~\cite{DBLP:journals/tcs/ZwickP96} (played on an automaton) is a tuple $(S,A,\Delta,R,s_\iota,\gamma)$ where $S$, $A$, $R$, $s_\iota$ and $\gamma$ are as for MDPs and $\Delta \subseteq S \times A \times S$ is a transition relation. A \emph{strategy} of the protagonist is a function $\sigma \colon S \to A$; while a strategy of the antagonist is a function $\tau: S \times A \to S$ such that $(s,a,\tau(s,a)) \in \Delta$ for all $(s,a) \in S \times A$. We remark that we are restricting ourselves to memoryless deterministic strategies for both players. This is known to be no loss of generality for the value problem (defined below) \cite{DBLP:journals/tcs/ZwickP96}. A pair $(\sigma,\tau)$ of strategies for both players defines a unique infinite path $\rho_{\sigma \tau} = s_0 a_0 s_1 \dots$ from $s_0 = s_\iota$ and its value $\mathrm{Val}(\rho_{\sigma \tau})$ is $\sum^\infty_{t=0}\gamma^t R(s_t,a_t)$. We are usually interested in the value of the game $\mathrm{Val}^* = \max_\sigma \min_\tau \mathrm{Val}(\rho_{\sigma \tau})$. The natural decision problem asks whether this value is at least some given (rational) threshold.
\begin{theorem}
    Discounted-sum games reduce in polynomial time to the robust discounted reward problem, and the constructed RMDP is $(s,a)$-rectangular with convex polytopic uncertainty.
\end{theorem}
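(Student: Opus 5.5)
Given a discounted-sum game $G=(S,A,\Delta,R,s_\iota,\gamma)$ and a threshold $\kappa$, I will build the RMDP $\cM_G=(S,A,\cU,R,s_\iota,\gamma)$. It keeps the states, actions, rewards, initial state and discount factor of $G$. For each $(s,a)$, the local uncertainty set $\cU_{(s,a)}$ is the simplex of distributions supported on the successor set $\Delta(s,a)=\{s'\mid (s,a,s')\in\Delta\}$. In other words, $\cU_{(s,a)}$ is the convex hull of the Dirac distributions $\delta_{s'}$ with $s'\in\Delta(s,a)$.

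In halfspace form this is $\vec{u}\geq 0$, $\vec{u}(s')\leq 0$ for $s'\notin\Delta(s,a)$, and $\sum_{s'}\vec{u}(s')=1$ (written as two inequalities). This gives $\cO(|S|)$ rows per pair, so the encoding is clearly polynomial. I assume, as is standard, that every $(s,a)$ has at least one successor, so each $\cU_{(s,a)}$ is nonempty. If needed, I will first add a sink with very negative reward to handle actions with no successor, or restrict the action sets. Taking $\cU=\bigtimes_{(s,a)}\cU_{(s,a)}$ makes $\cM_G$ $(s,a)$-rectangular with convex polytopic uncertainty, and the threshold $\kappa$ is kept unchanged.

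For correctness I want $\Vpes^*_{\cM_G}(s_\iota)=\mathrm{Val}^*$. By \autoref{prop:policy:classes}, $\Vpes^*_{\cM_G}=\max_{\sigma\in\Pi^\text{MD}}\Vpes^\sigma_{\cM_G}$. Memoryless deterministic policies of $\cM_G$ are exactly protagonist strategies $\sigma\colon S\to A$, so it suffices to show $\Vpes^\sigma_{\cM_G}(s_\iota)=\min_\tau \mathrm{Val}(\rho_{\sigma\tau})$ for each fixed $\sigma$.

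The vertices of $\cU$ are precisely the products of Dirac choices, i.e.\ the $\tau$-vertices for antagonist strategies $\tau$ with $(s,a,\tau(s,a))\in\Delta$, and $\cU$ is their convex hull. By \autoref{lem:tau-vtx}, $\Vpes^\sigma_{\cM_G}=\mathrm{aVal}^\sigma_{\cM_G}$. Since $\sigma$ is deterministic, fixing $\tau$ yields a single path $\rho_{\sigma\tau}$, so $\mathrm{aVal}^\sigma_{\cM_G}(s_\iota)=\min_\tau\mathrm{Val}(\rho_{\sigma\tau})$. The infimum is a minimum because there are finitely many $\tau$.

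I have to check one subtlety here. The condition $\tau\models\cU$ in $\mathrm{aVal}$ ranges over functions whose successors are possible; in this construction these are exactly the $\Delta$-respecting $\tau$. Also, \autoref{lem:tau-vtx} is phrased as if every $\tau\colon S\times A\to S$ had a vertex. I will either restrict its hypothesis to $\Delta$-respecting $\tau$ or re-derive the argument directly: the robust value of a fixed policy is attained at a vertex of $\cU$, as used in the proof of \autoref{thm:rmdp:policy:eval:in:P}, and vertices are Dirac products.

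Combining the two steps, $\Vpes^*_{\cM_G}(s_\iota)=\max_\sigma\min_\tau\mathrm{Val}(\rho_{\sigma\tau})=\mathrm{Val}^*$. Hence the game value is at least $\kappa$ if and only if the RMDP instance is a yes-instance. One point needs care: the antagonist in the RMDP may pick per-step, history-dependent distributions, while $\mathrm{Val}^*$ uses memoryless $\tau$. Rectangularity and the vertex argument above settle this at the level of $\Vpes^\sigma$ as defined with $\vec{u}$ fixed. If history-dependent adversaries are also intended, this is covered by memoryless determinacy of discounted-sum games \cite{DBLP:journals/tcs/ZwickP96}.

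The main obstacle I expect is precisely this vertex/lemma alignment, namely justifying that restricting nature to Dirac-product vertices loses nothing. Everything else is bookkeeping on the polynomial size of the halfspace encoding.
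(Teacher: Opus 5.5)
Your proposal is correct and follows the paper's proof. You use the same construction: uncertainty sets that are the simplex over the $\Delta$-successors, equivalently the convex hull of the $\tau$-vertices. Correctness then comes from \autoref{lem:tau-vtx}, as in the paper. Your additions fill in details the paper leaves implicit: invoking \autoref{prop:policy:classes} to restrict to memoryless deterministic policies, and noting that the lemma's hypothesis must be read as ranging only over $\Delta$-respecting $\tau$.
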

\begin{proof}
For any instance of a discounted-sum game, add as uncertainty sets the convex hull of all $\tau$-vertices for $\tau$ a strategy of the adversary. Equivalently, define the uncertainty set so that $\vec{u}(s,a,s') \geq 0$ if $(s,a,s') \in \Delta$, and is $0$ otherwise, for all $s,a,s'$; and $\sum_{s'} \vec{u}(s,a,s') = 1$ for all $s,a$. We observe that this 
yields an $(s,a)$-rectangular uncertainty:  %
the successor probabilities depend  only on $s$ and $a$. Moreover, the uncertainty sets are all convex by definition, too. To conclude, it follows from \autoref{lem:tau-vtx} that the value of the game coincides with the robust expected discounted-sum value of the RMDP. 
\end{proof}
Using the known reduction from parity games to discounted-sum games \cite{DBLP:journals/ipl/Jurdzinski98}, we conclude that if we had a polynomial-time algorithm for the robust expected discounted-sum problem we would have a polynomial-time algorithm for parity games.

\section{Correctness of the Robust Optimality Test}
\label{apx:robust:optimality:test}

For standard MDPs, it is known that the set of possibly optimal actions is defined as
    \[
        A_s^* = \left\{a \in A_s^* \midd \Delta^\pi_{sa} \geq \max_{a \in A} \Delta^\pi_{sa} - \frac{\gamma}{1-\gamma} \Big( \max_{s' \in S} \max_{a' \in A} \Delta^\pi_{s'a'} - \min_{s' \in S} \max_{a' \in A} \Delta^\pi_{s'a'} \Big) \right\}.
    \]
In the following, we show that the set of possibly optimal actions in an RMDP is given by
    \[
        A_s^* = \left\{a \in A_s^* \midd \Deltapes^\pi_{sa} \geq \max_{a \in A} \Deltapes^\pi_{sa} - \frac{\gamma}{1-\gamma} \Big( \max_{s' \in S} \max_{a' \in A} \Deltapes^\pi_{s'a'} - \min_{s' \in S} \max_{a' \in A} \Deltapes^\pi_{s'a'} \Big) \right\}.
    \]
Recall the robust Bellman operator
\[
\fB(\Vpes)(s) = \max_{a \in A} R(s,a) + \gamma \inf_{\vec{u} \in \cU_{(s,a)}} \sum_{s' \in S} P_{\vec{u}}(s,a)(s')\Vpes(s').
\]
The robust Bellman operator is a monotone contraction mapping~\cite[Theorem 3.2]{DBLP:journals/mor/Iyengar05}.

\begin{theorem}
    Given an $(s,a)$-rectangular RMDP $\cM = (S,A,\cU,R,\sinit,\gamma)$, an action $a \in A$ is suboptimal for state $s \in S$, when for any $\vec{x} \in \bR^{S}$
    \[
    R(s,a) + \gamma \inf_{\vec{u} \in \cU_{(s,a)}} \sum_{s' \in S} P_{\vec{u}}(s,a)(s')\vec{x}(s')  < \fB(\vec{x})(s) - \frac{\gamma}{1-\gamma} \mathsf{span}(\fB(\vec{x}) - \vec{x}),
    \]
    where $\mathsf{span}(\vec{y}) = \max_{s \in S} \vec{y}(s) - \min_{s \in S} \vec{y}(s)$ for any $\vec{y} \in \bR^{S}$.
\end{theorem}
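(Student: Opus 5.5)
The plan is the classical MacQueen-style bounding argument, transferred to the robust Bellman operator $\fB$. Write $Q_{\vec{x}}(s,a) = R(s,a) + \gamma \inf_{\vec{u} \in \cU_{(s,a)}} \sum_{s'} P_{\vec{u}}(s,a)(s')\vec{x}(s')$, so that $\fB(\vec{x})(s) = \max_a Q_{\vec{x}}(s,a)$. Let $Q^*(s,a)$ be $Q_{\vec{x}}(s,a)$ evaluated at $\vec{x} = \Vpes^*_\cM$. By \autoref{prop:robust-bellman}, $\Vpes^*_\cM(s) = \max_a Q^*(s,a)$.

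\emph{Suboptimal} will mean $Q^*(s,a) < \Vpes^*_\cM(s)$. Such an action can never be chosen by an optimal memoryless deterministic policy, because an optimal policy must attain the maximum in the fixed-point equation at every state. The goal is therefore to show that the hypothesis implies $Q^*(s,a) < \Vpes^*_\cM(s)$.

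The proof uses two elementary properties of the inner map $\vec{y} \mapsto \inf_{\vec{u}} \sum_{s'} P_{\vec{u}}(s,a)(s')\vec{y}(s')$.
\begin{itemize}
  \item \emph{Monotonicity:} since all $P_{\vec{u}}$ are nonnegative, $\vec{y} \leq \vec{z}$ implies the map is no larger at $\vec{y}$ than at $\vec{z}$.
  \item \emph{Translation:} since every $P_{\vec{u}}(s,a)$ sums to one, adding a constant $c$ to every entry of $\vec{y}$ adds exactly $c$ to the infimum.
\end{itemize}
Consequently $Q_{\vec{y}+c\vec{1}}(s,a) = Q_{\vec{y}}(s,a) + \gamma c$ and $\fB(\vec{y}+c\vec{1}) = \fB(\vec{y}) + \gamma c\vec{1}$, and both $Q_{\cdot}(s,a)$ and $\fB$ are monotone.

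Next, set $\delta^+ = \max_s(\fB(\vec{x})-\vec{x})(s)$ and $\delta^- = \min_s(\fB(\vec{x})-\vec{x})(s)$. We have $\fB(\vec{x}) \leq \vec{x} + \delta^+\vec{1}$. Iterating with monotonicity and translation gives $\fB^n(\vec{x}) \leq \vec{x} + \delta^+(1+\gamma+\dots+\gamma^{n-1})\vec{1}$. Since $\fB^n(\vec{x}) \to \Vpes^*_\cM$ by contraction (\autoref{prop:robust-bellman}), we get
\[
\Vpes^*_\cM \leq \vec{x} + \tfrac{\delta^+}{1-\gamma}\vec{1},
\]
and symmetrically $\Vpes^*_\cM \geq \vec{x} + \tfrac{\delta^-}{1-\gamma}\vec{1}$. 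Applying $\fB$ once more to the lower bound and using $\Vpes^*_\cM = \fB(\Vpes^*_\cM)$ sharpens it to
\[
\Vpes^*_\cM \geq \fB(\vec{x}) + \tfrac{\gamma\delta^-}{1-\gamma}\vec{1}.
\]
Applying $Q_{\cdot}(s,a)$ to the upper bound gives $Q^*(s,a) \leq Q_{\vec{x}}(s,a) + \tfrac{\gamma\delta^+}{1-\gamma}$.

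Finally, chain these with the hypothesis $Q_{\vec{x}}(s,a) < \fB(\vec{x})(s) - \tfrac{\gamma}{1-\gamma}(\delta^+-\delta^-)$:
\[
Q^*(s,a) \leq Q_{\vec{x}}(s,a) + \tfrac{\gamma\delta^+}{1-\gamma} < \fB(\vec{x})(s) + \tfrac{\gamma\delta^-}{1-\gamma} \leq \Vpes^*_\cM(s).
\]
So $a$ is strictly suboptimal at $s$.

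The main obstacle is the bookkeeping of constants in the two-sided bounds. The lower bound must be taken in its sharpened form ($\fB(\vec{x})$ plus $\gamma\delta^-/(1-\gamma)$, rather than $\vec{x}$ plus $\delta^-/(1-\gamma)$) so that the span term cancels exactly. The robust-specific step is only the translation property of the infimum, which relies on every instantiation being stochastic.

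To recover the pruning rule in \autoref{def:robust:policy:iteration}, take $\vec{x} = \Vpes^\pi_\cM$. Then $\Deltapes^\pi_{sa} = Q_{\vec{x}}(s,a) - \vec{x}(s)$ and $\fB(\vec{x})(s) - \vec{x}(s) = \max_{a'} \Deltapes^\pi_{sa'}$. Subtracting $\vec{x}(s)$ from both sides of the criterion yields exactly the stated test.
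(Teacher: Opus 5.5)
Your proof is correct and follows essentially the same route as the paper: bound the robust value between $\vec{x} + b\vec{1}$ and $\vec{x} + c\vec{1}$ with $b = \delta^-/(1-\gamma)$ and $c = \delta^+/(1-\gamma)$, then use monotonicity and translation-equivariance of $\fB$ and of the inner infimum to chain $Q^*(s,a) \le Q_{\vec{x}}(s,a) + \gamma c < \fB(\vec{x})(s) + \gamma b \le \Vpes^*_{\cM}(s)$. The only difference is that you derive this sandwich bound by iterating $\fB$ from $\vec{x}$ and passing to the limit via the contraction property, whereas the paper simply sets $b$ and $c$ without showing that $\vec{x} + b\vec{1} \le \Vpes^*_{\cM} \le \vec{x} + c\vec{1}$ holds, so your version fills that gap.
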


\begin{proof}

Let $b,c \in \bR$ with $b \leq c$, $\vec{x} \in \bR^{S}$, and $\vec{v} = \Vpes$ the robust value function as a vector.
If 
\[
\vec{x} + b\vec{1} \leq \vec{v} \leq \vec{x} + c\vec{1} \text{ and }
R(s,a) + \gamma \inf_{\vec{u} \in \cU_{(s,a)}} \sum_{s'} P_{\vec{u}}(s,a)(s') \vec{x}(s') < \fB(\vec{x})(s) - \gamma(c-b),
\]
then action $a$ is suboptimal.

For any $\vec{u} \in \cU_{(s,a)}$, let $\vec{p} \in \bR^{S \times S}$ be the stochastic matrix defined by $\vec{p}_{\vec{u}}^a(s,s') = P_{\vec{u}}(s,a)(s')$.
Then for all $\vec{u} \in \cU_{(s,a)}$
\begin{align*}
    R(s,a) + \gamma \sum_{s' \in S} \vec{p}_{\vec{u}}^a(s,s')(\vec{x}(s') + c) = R(s,a) + \gamma \sum_{s'} \vec{p}_{\vec{u}}^a(s,s')\vec{x}(s') +\gamma c,
\end{align*}
as $\vec{p}_{\vec{u}}^a$ is stochastic.
The above holds in particular for $\inf_{\vec{u} \in \cU_{(s,a)}} \vec{p}_{u}^a$, since $\cU_{(s,a)}$ is a convex polytope and the infimum is attained at one of the vertices. 
Thus, we obtain
\begin{align*}
    R(s,a) + \gamma \inf_{\vec{u} \in \cU_{(s,a)}} \sum_{s' \in S} \vec{p}_{\vec{u}}^a(s,s')(\vec{x}(s') + c) & = R(s,a) + \gamma   \inf_{\vec{u} \in \cU_{(s,a)}} \sum_{s'} \vec{p}_{\vec{u}}^a(s,s')\vec{x}(s') + \gamma c \\
    & \leq \fB(\vec{x})(s) + \gamma b = \fB (\vec{x}+b\vec{1})(s).
\end{align*}

From the monotonicity of $\fB$, we obtain
\begin{align*}
    \vec{v}(s) &= \fB(\vec{v})(s) \geq \fB(\vec{x} + b \vec{1})(s)\\
    &> R(s,a) + \gamma \inf_{\vec{u} \in \cU_{(s,a)}} \sum_{s' \in S} \vec{p}_{\vec{u}}^a(s,s')(\vec{x}(s') + c)\\
    &\geq R(s,a) + \gamma \inf_{\vec{u} \in \cU_{(s,a)}} \sum_{s' \in S} \vec{p}_{\vec{u}}^a(s,s') \vec{v}(s')
\end{align*}

Set $b = \frac{1}{1-\gamma} \min_{s \in S} (\fB(\vec{x}) - \vec{x})(s)$ and $c = \frac{1}{1-\gamma}\max_{s \in S} (\fB(\vec{x}) - \vec{x})(s)$, and we obtain
    \[
    R(s,a) + \gamma \inf_{\vec{u} \in \cU_{(s,a)}} \sum_{s' \in S} P_{\vec{u}}(s,a)(s')\vec{x}(s')  < \fB(\vec{x})(s) - \frac{\gamma}{1-\gamma} \mathsf{span}(\fB(\vec{x}) - \vec{x}),
    \]
    as desired.
\end{proof}

From the result above, with $\vec{x} = \Vpes^\pi$ the value under some policy $\pi$, we note that
\[
(\fB(\Vpes^\pi) - \Vpes^\pi)(s) = \max_{a \in A} R(s,a) + \gamma \inf_{\vec{u} \in \cU_{(s,a)}} \sum_{s' \in S} P_{\vec{u}}(s,a)(s')\Vpes^\pi(s') - \Vpes^\pi(s) = \max_{a \in A} \Deltapes_{sa}^\pi.
\]
Thus, we can reduce the set of possibly optimal actions at state $s \in S$ to
\[
\left\{a \in A \midd \Deltapes_{sa}^\pi \geq \max_{a' \in A} \Deltapes_{sa'}^\pi - \frac{\gamma}{1-\gamma} (\max_{s' \in S} \max_{a' \in A} \Deltapes_{s'a'}^\pi - \min_{s' \in S} \max_{a' \in A} \Deltapes_{s'a'}^\pi)  \right\}.
\]

\section{Proof \texorpdfstring{\Cref{lemma:robust:bisim:value:is:pseudometric}}{Lemma 5.5}}\label{apx:proof:robust:bisim:value:is:pseudometric}
\textbf{\Cref{lemma:robust:bisim:value:is:pseudometric}}\hspace{1em}
The robust value function $\Vpes$ of the RMDP $\cM$ constructed in \Cref{def:bisim:RMDP} is a pseudometric over the set of states $S$.

\begin{proof}
    By construction of the reward function $\cR$, all rewards are nonnegative, hence, $\Vpes \colon S \times S \to \bR_{\geq 0}$, and we can initialize the value function as $\Vpes^{(0)} = \vec{0}$.
    We now show the properties of a pseudometric hold by induction on the iterations of $\Vpes^{(n)}$.
    For the base case $\Vpes^{(0)} = \vec{0}$, the properties immediately follow.
    As induction hypothesis, assume $\Vpes^{(n)}$ is a pseudometric.
    We now show that applying the robust Bellman operator on $\Vpes^{(n)}$, \ie, computing $\Vpes^{(n+1)} = \fB(\Vpes^{(n)})$ again satisfies properties 1--3.
    \begin{enumerate}
        \item Assume $s = s'$, we write
        \[
        \Vpes^{(n+1)}(s,s') = \max_{a} \cR(s,s',a) + \gamma \inf_{\vec{u} \in \cU_{(s,s',a)}} \sum_{(t,t') \in S \times S} P_{\vec{u}}(s,s',a)(t,t')\Vpes^{(n)}(t,t'),
        \]
        and observe that for all $a \in A$ we have $\cR(s,s',a) = (1-\gamma)|R(s,a) - R(s',a)| = 0$.
        Additionally, since $s = s'$, we have $P(s,a) = P(s',a)$ and since $\cU_{(s,s',a)} = \Lambda_{P(s,a),P(s',a)}$ we get that there exists a diagonal coupling of cost zero.
        Hence, $\Vpes^{(n+1)}(s,s') = 0$.
        \item 
        Note that $\Vpes^{(n)}(s,s') = \Vpes^{(n)}(s',s)$, it follows that
        \begin{align*}
            &\Vpes^{(n+1)}(s,s') = \max_{a \in A} \cR(s,s',a) + \gamma \inf_{\vec{u} \in \cU_{(s,s',a)}} \sum_{(t,t') \in S \times S} P_{\vec{u}}(s,s',a)(t,t')\Vpes^{(n)}(t,t') \\
            &\quad = \max_{a \in A} \cR(s',s,a) + \gamma \inf_{\vec{u} \in \cU_{(s',s,a)}} \sum_{(t',t) \in S \times S} P_{\vec{u}}(s',s,a)(t',t)\Vpes^{(n)}(t',t) = \Vpes^{(n+1)}(s',s).
        \end{align*}
        \item Using that $\Vpes^{(n)}$ is a pseudometric by assumption, and that $\cU_{(s,s'',a)}$ is the set of couplings between $P(s,a)$ and $P(s'',a)$, we observe that the inner minimization problem defines the Kantorovich distance, which is a pseudometric and thus satisfies the triangle inequality.
            Thus, it follows that
        \begin{align*}
            &\Vpes^{(n+1)}(s,s'') = \max_{a \in A} \cR(s,s'',a) + \gamma \inf_{\vec{u} \in \cU(s,s'',a)} \sum_{(t,t'') \in S \times S} P_{\vec{u}}(s,s'',a)(t,t'') \Vpes^{(n)}(t,t'')\\
            & \leq \max_{a \in A} \cR(s,s',a) + \cR(s',s'',a) + \gamma \Bigg( \inf_{\vec{u} \in \cU(s,s',a)} \left\{ \sum_{(t,t') \in S \times S} P_{\vec{u}}(s,s',a)(t,t') \Vpes^{(n)}(t,t')\right\} \\
            &\quad + \inf_{\vec{u} \in \cU_{(s',s'',a)}} \bigg\{\sum_{(t',t'') \in S \times S} P_{\vec{u}}(s',s'',a)(t',t'')\Vpes^{(n)}(t',t'') \bigg\} \Bigg)\\
            & \leq \max_{a \in A} \cR(s,s',a) + \gamma \inf_{\vec{u} \in \cU(s,s',a)} \sum_{(t,t') \in S \times S} P_{\vec{u}}(s,s',a)(t,t') \Vpes^{(n)}(t,t')\\
            &\quad + \max_{a \in A} \cR(s',s'',a) + \gamma \inf_{\vec{u} \in \cU(s',s'',a)} \sum_{(t',t'') \in S \times S} P_{\vec{u}}(s',s'',a)(t',t'') \Vpes^{(n)}(t',t'')\\
            & = \Vpes^{(n+1)}(s,s') + \Vpes^{(n+1)}(s',s'').
        \end{align*}
    \end{enumerate}
\end{proof}

\end{document}